\newcounter{savenumi}
\newtheorem{theoremfoo}{Theorem}
\newenvironment{theorem}{\pagebreak[1]\begin{theoremfoo}}{\end{theoremfoo}}
\newtheorem{propositionfoo}[theoremfoo]{Proposition}
\newenvironment{proposition}{\pagebreak[1]\begin{propositionfoo}}{\end{propositionfoo}}
\newtheorem{lemmafoo}[theoremfoo]{Lemma}
\newenvironment{lemma}{\pagebreak[1]\begin{lemmafoo}}{\end{lemmafoo}}
\newtheorem{conjecturefoo}[theoremfoo]{Conjecture}
\newtheorem{corollaryfoo}[theoremfoo]{Corollary}
\newenvironment{corollary}{\pagebreak[1]\begin{corollaryfoo}}{\end{corollaryfoo}}
\newtheorem{exercisefoo}{Exercise}
\newtheorem{openfoo}[theoremfoo]{Question}
\newtheorem{nttn}[theoremfoo]{Notation}
\newtheorem{dfntn}[theoremfoo]{Definition}
\newenvironment{definition}{\pagebreak[1]\begin{dfntn}\rm}{\end{dfntn}}
\newenvironment{proof}
    {\pagebreak[1]{\narrower\noindent {\bf Proof:\quad\nopagebreak}}}{\QED}
\newcommand{\dash}{{\rm\mbox{-}}}
\newcommand{\ceiling}[1]{\left\lceil#1\right\rceil}
\newcommand{\DTIME}{{\rm DTIME}}
\newcommand{\PH}{{\rm PH}}
\newcommand{\NTIME}{{\rm NTIME}}
\newcommand{\poly}{{\rm poly}}
\newcommand{\NP}{{\rm NP}}
\renewcommand{\P}{{\rm P}}      
\newcommand{\coNP}{{\co\NP}}
\def\nre.{$n$\/-r.e.}
\newcommand{\scrod}{\quad\nopagebreak}
\newcommand{\co}{{\rm co}\dash}
\newcommand{\EXP}{{\rm EXP}}
\newcommand{\NEXP}{{\rm NEXP}}
\newtheorem{factfoo}[theoremfoo]{Fact}
\newcommand{\BPP}{{\rm BPP}}
\newtheorem{propertyfoo}[theoremfoo]{Property}
\def\@makechapterhead#1{ \vspace*{50pt} { \parindent 0pt \raggedright 
 \ifnum \c@secnumdepth >\m@ne \huge\bf \@chapapp{} \thechapter. \par 
 \vskip 20pt \fi \Huge \bf #1\par 
 \nobreak \vskip 40pt } }
\def\@sect#1#2#3#4#5#6[#7]#8{\ifnum #2>\c@secnumdepth
     \def\@svsec{}\else 
     \refstepcounter{#1}\edef\@svsec{\csname the#1\endcsname.\hskip 1em }\fi
     \@tempskipa #5\relax
      \ifdim \@tempskipa>\z@ 
        \begingroup #6\relax
          \@hangfrom{\hskip #3\relax\@svsec}{\interlinepenalty \@M #8\par}
        \endgroup
       \csname #1mark\endcsname{#7}\addcontentsline
         {toc}{#1}{\ifnum #2>\c@secnumdepth \else
                      \protect\numberline{\csname the#1\endcsname}\fi
                    #7}\else
        \def\@svsechd{#6\hskip #3\@svsec #8\csname #1mark\endcsname
                      {#7}\addcontentsline
                           {toc}{#1}{\ifnum #2>\c@secnumdepth \else
                             \protect\numberline{\csname the#1\endcsname}\fi
                       #7}}\fi
     \@xsect{#5}}
\def\@begintheorem#1#2{\it \trivlist \item[\hskip \labelsep{\bf #1\ #2.}]}
\def\@opargbegintheorem#1#2#3{\it \trivlist
      \item[\hskip \labelsep{\bf #1\ #2\ (#3).}]}
\newcommand{\ppoly}{\P/\poly}
\newif\ifshortconferences
\newif\ifmediumconferences
\def\ending#1{{\count1=#1\relax
\count2=\count1
\divide\count2 by 100
\multiply\count2 by 100
\advance\count1 by -\count2
\ifnum\count1=11
th%
\else \ifnum\count1=12
th%
\else \ifnum\count1=13
th%
\else 
\count2=\count1
\divide\count1 by 10
\multiply\count1 by 10
\advance\count2 by -\count1
\ifnum\count2=1
st%
\else \ifnum\count2=2
nd%
\else \ifnum\count2=3
rd%
\else th%
\fi\fi\fi\fi\fi\fi
}}
\def\Proceedingsofthe{\ifshortconferences Proc.\else\ifmediumconferences Proc.\else Proceedings of the\fi\fi}
\newcounter{confnum}
\def\conf#1#2{%
\setcounter{confnum}{#2}%
\addtocounter{confnum}{-\csname #1zero\endcsname}%
\ifnum\value{confnum}=1%
\expandafter\ifx\csname #1One\endcsname\relax%
\Proceedingsofthe\ \arabic{confnum}\ending{\value{confnum}}\ \csname #1name\endcsname%
\else \csname #1One\endcsname\fi%
\else%
\Proceedingsofthe\
\arabic{confnum}\ending{\value{confnum}}\ \csname #1name\endcsname\fi}
\def\qsym{\vrule width0.7ex height0.9em depth0ex}
\newif\ifqed\qedtrue
\def\noqed{\global\qedfalse}
\def\qed{\ifqed{\penalty1000\unskip\nobreak\hfil\penalty50
\hskip2em\hbox{}\nobreak\hfil\qsym
\parfillskip=0pt \finalhyphendemerits=0\par\medskip}\fi\global\qedtrue}
\def\eqnqed{\noqed
	\def\@tempa{equation}
	\ifx\@tempa\@currenvir\def\@eqnnum{\qsym}%
	\addtocounter{equation}{-1}\else%
    \def\@@eqncr{\let\@tempa\relax
    \ifcase\@eqcnt \def\@tempa{& & &}\or \def\@tempa{& &}%
      \else \def\@tempa{&}\fi
     \@tempa {\def\@eqnnum{{\qsym}}\@eqnnum}
     \global\@eqnswtrue\global\@eqcnt\z@\cr}\fi}
\def\eqnlabel#1#2{\if@filesw {\let\thepage\relax%
   \def\protect{\noexpand\noexpand\noexpand}%
   \edef\@tempa{\write\@auxout{\string
      \newlabel{#2}{{{#1}}{\thepage}}}}%
   \expandafter}\@tempa%
   \if@nobreak \ifvmode\nobreak\fi\fi\fi%
	\def\@tempa{equation}
	\ifx\@tempa\@currenvir\def\theequation{{#1}}%
	\addtocounter{equation}{-1}\else%
    \def\@@eqncr{\let\@tempa\relax
    \ifcase\@eqcnt \def\@tempa{& & &}\or \def\@tempa{& &}%
      \else \def\@tempa{&}\fi
     \@tempa {\def\@eqnnum{{#1}}\@eqnnum}
     \global\@eqnswtrue\global\@eqcnt\z@\cr}\fi}
\def\QED{\qed}
\newcommand{\MCSP}{{\rm MCSP}}
\newcommand{\tally}{{\rm TALLY}}
\newcommand{\ZPP}{{\rm ZPP}}
\newcommand{\ZEP}{{\rm ZPEXP}}
\newcommand{\littleo}{{\rm o}}
\newcommand{\PrTime}{{\rm BTIME}}
\newcommand{\ZPrTime}{{\rm ZTIME}}
\newcommand{\streaming}{{\rm Streaming}}
\newcommand{\ex}{{\rm exp}}
\newcommand{\MKtP}{{\rm MKtP}}
\newcommand{\bigO}{{\rm O}}
\newcommand{\Circuit}{{\rm Circuit}}
\newcommand{\mod}{{\rm mod}}
\newcommand{\Pad}{{\rm Pad}}
\newcommand{\GF}{{\rm GF}}
\begin{document}

\date{}

\title{Hardness of Sparse Sets  and Minimal Circuit Size Problem}
\author{Bin Fu\\
 		Department of Computer Science,\\
        University of Texas Rio Grande Valley,        Edinburg, TX 78539, USA.\\
        	bin.fu@utrgv.edu
} 
\maketitle

\begin{abstract}
We 
 study the magnification of
hardness of sparse sets in nondeterministic time complexity classes
on a randomized streaming model. One of our results shows that if
there exists a $2^{n^{o(1)}}$-sparse set in $\NTIME(2^{n^{o(1)}})$
that does not have any randomized streaming algorithm with
$n^{o(1)}$ updating time, and $n^{o(1)}$ space, then
$\NEXP\not=\BPP$, where a $f(n)$-sparse set is a language that has
at most $f(n)$ strings of length $n$. We also show that if $\MCSP$
is  $\ZPP$-hard under polynomial time truth-table reductions, then
$\EXP\not=\ZPP$.


\end{abstract}


\section{Introduction}
Hardness magnification  has been intensively studied in the recent
years~\cite{OliveiraSanthanam18,ChenJinWilliams19,McKayMurrayWilliams19,OliveiraPichSanthanam19}.
A small lower bound such as $\Omega(n^{1+\epsilon})$ for one problem
may bring a large lower bound such as super-polynomial lower bound for
another problem. This research is closely related to Minimum Circuit
Size Problem (\MCSP) that is to determine if a given string of
length $n=2^m$ with integer $m$ can be generated by a circuit of
size $k$. For a function $s(n):\mathbb{N}\rightarrow \mathbb{N}$, $\MCSP[s(n)]$ is
that given a string $x$ of length $n=2^m$, determine if there is a
circuit of size at most $s(n)$ to generate $x$. This problem has
received much attention in the recent
years~\cite{AllenderHoldenKabanets17,AllenderHirahara17,OliveiraSanthanam18,HitchcockPavan15,HiraharaWatanabe16,HiraharaSanthanam17,HiraharaOliveiraSanthanam18,ChenJinWilliams19,OliveiraPichSanthanam19,MurrayWilliams17,McKayMurrayWilliams19}.

Hardness magnification results are shown in a series of recent
papers about
\MCSP~\cite{OliveiraSanthanam18,ChenJinWilliams19,McKayMurrayWilliams19,OliveiraPichSanthanam19}.
Oliveira and Santhanam~\cite{OliveiraSanthanam18} show that
$n^{1+\epsilon}$-size lower bounds for approximating
$\MKtP[n^{\beta}]$ with an additive error $O(\log n)$ implies $\EXP\not\subseteq
\ppoly$. Oliveira, Pich and Santhanam~\cite{OliveiraPichSanthanam19}
show that for all small $\beta>0$, $n^{1+\epsilon}$-size lower
bounds for approximating $\MCSP[n^{\beta m}]$ with factor $O(m)$
error implies $\NP\not\subseteq \ppoly$. McKay, Murray, and
Williams~\cite{McKayMurrayWilliams19} show that an
$\Omega(n\poly(\log n))$ lower bound on $\poly(\log n)$ space
deterministic streaming model for $\MCSP[\poly(\log n)]$ implies
separation of P from NP.


The hardness magnification of non-uniform complexity for  sparse
sets is recently developed by Chen, Jin and
Williams~\cite{ChenJinWilliams19}. Since $\MCSP[s(n)]$ are of
sub-exponential density for $s(n)=n^{o(1)}$, the hardness
magnification  for sub-exponential density sets is more general than
the hardness magnification  for $\MCSP$. They show that if there is
an $\epsilon>0$ and a family of languages  $\{L_b\}$ (indexed over
$b\in (0,1)$) such that each $L_b$ is a $2^{n^b}$-sparse language in
$\NP$, and $L_b\not\in \Circuit[n^{1+\epsilon}]$, then
$\NP\not\subseteq \Circuit[n^k]$ for all $k$, where $\Circuit[f(n)]$
is the class of languages with nonuniform circuits of size bounded
by function $f(n)$. Their result also holds for all complexity
classes $C$ with $\exists C=C$.

On the other hand, it is unknown if \MCSP~is NP-hard. Murray and
Williams ~\cite{MurrayWilliams17} show that \NP-completeness of
\MCSP~implies the separation of \EXP~ from ZPP, a long standing
unsolved problem in computational complexity theory. Hitchcock and
Pavan~\cite{HitchcockPavan15,MurrayWilliams17} if \MCSP~is
$\NP$-hard under polynomial time truth-table reductions, then
EXP$\not\subseteq \NP \cap \ppoly$.

Separating \NEXP~from \BPP, and \EXP~from \ZPP~ are two of major
open problems in the computational complexity theory. We are
motivated by further relationship about sparse sets and \MCSP, and
the two separations $\NEXP\not=\BPP$ and $\EXP\not=\ZPP$. We develop
a polynomial method on finite fields to magnify  the hardness of
sparse sets in nondeterministic time complexity classes over a
randomized streaming model. One of our results show that if there
exists a $2^{n^{o(1)}}$-sparse set in $\NTIME(2^{n^{o(1)}})$ that
does not have a randomized streaming algorithm with $n^{o(1)}$ updating
time, and $n^{o(1)}$ space, then $\NEXP\not=\BPP$, where a
$f(n)$-sparse set is a language that has at most $f(n)$ strings of
length $n$. Our magnification result has a flexible trade off
between the spareness and time complexity.

We use two functions $d(n)$ and $g(n)$ to control the sparseness of
a tally set $T$. Function $d(n)$ gives an upper bound for the number
of elements of in $T$ and $g(n)$ is the gap lower bound between a
string $1^n$ and the next string $1^m$ in $T$, which satisfy
$g(n)<m$. The class $\tally(d(n), g(n))$ defines the class of all
those tally sets. By choosing $d(n)=\log\log n$, and
$g(n)=2^{2^{2n}}$, we prove that if~\MCSP~is $\ZPP\cap \tally(d(n),
g(n))$-hard under polynomial time truth-table reductions, then
$\EXP\not=\ZPP$.

\subsection{Comparison with the existing results}
Comparing with some existing results about sparse sets hardness
magnification in this line~\cite{ChenJinWilliams19}, there are some
new advancements in this paper.
\begin{enumerate}[1.]
    \item
    Our magnification  of sparse set is based on a
    uniform streaming model. A class of results
    in~\cite{ChenJinWilliams19} are based on nonuniform models.
    In~\cite{McKayMurrayWilliams19}, they show that if there is
    $A\in\PH$, and a function $s(n)\ge \log n$,  search-$\MCSP^A[s(n)]$
    does not have $s(n)^c$ updating time in deterministic streaming
    model for all positive, then ${\rm P}\not=\NP$. $\MCSP[s(n)]$ is a
    $s(n)^{\bigO(s(n))}$-sparse set.
    \item
    Our method is conceptually simple, and easy to understand. It is a
    polynomial algebraic approach on finite fields.
    \item
    A flexible trade off between sparseness and time complexity is given
    in our paper.
\end{enumerate}

Proving NP-hardness for \MCSP~implies
$\EXP\not=\ZPP$~\cite{HitchcockPavan15,MurrayWilliams17}.  We
consider the implication of \ZPP-hardness for $\MCSP$, and show that
if \MCSP~is $\ZPP\cap \tally(d(n), g(n))$-hard for a function pair
such as $d(n)=\log\log n$ and $g(n)=2^{2^{2n}}$, then
$\EXP\not=\ZPP$. It seems that proving \MCSP~is $\ZPP$-hard is much
easier than proving \MCSP~is \NP-hard since $\ZPP\subseteq
(\NP\cap\coNP)\subseteq \NP$. According to the low-high hierarchy theory
developed by Sch\"{o}ning~\cite{SchoningHigh}, the class
$\NP\cap\coNP$ is the low class $L_1$. Although $\MCSP$ may not be
in the class $\ZPP$, it is possible to be $\ZPP$-hard.

\section{Notations}\label{notation-sec}

Minimum Circuit Size Problem (\MCSP) is that given an integer $k$,
and a binary string $T$ of length $n=2^m$ for some integer $m\ge 0$,
determine if $T$ can be generated by a circuit of size $k$. Let
$\mathbb{N}=\{1,2,\cdots\}$ be the set of all natural numbers. For a
language $L$, $L^n$ is the set of strings in $L$ of length $n$, and
$L^{\le n}$ is the set of strings in $L$ of length at most $n$. For
a finite set $A$, denote $|A|$ to be the number of elements in $A$.
For a string $s$, denote $|s|$ to be its length.
 If $x,y,z$ are not empty strings, we
have a coding method that converts a $x, y$ into a string $\langle
x, y\rangle$ with $|x|+|y|\le |\langle x, y\rangle|\le 3(|x|+|y|)$
and converts $x,y,z$ into $\langle x, y,z\rangle$ with
$|x|+|y|+|z|\le |\langle x, y, z\rangle|\le 3(|x|+|y|+|z|)$. For
example, for $x=x_1\cdots x_{n_1}, y=y_1\cdots y_{n_2}, z=z_1\cdots
z_{n_3}$, let $\langle x, y, z\rangle=1x_1\cdots
1x_{n_1}001y_1\cdots 1y_{n_2}001z_1\cdots 1z_{n_3}$.

Let $\DTIME(t(n))$ be the class of languages accepted by
deterministic Turing machines in time $\bigO(t(n))$. Let
$\NTIME(t(n))$ be the class of languages accepted by
nondeterministic Turing machines in time $\bigO(t(n))$. Define
$\EXP=\cup_{c=1}^{\infty}\DTIME(2^{n^c})$ and
$\NEXP=\cup_{c=1}^{\infty}\NTIME(2^{n^c})$. P/poly, which is also
called PSIZE, is the class of languages that have polynomial-size
circuits.

We use a polynomial method on a finite field $F$. It is classical
theory that each finite field is of size $p^k$ for some prime number
$p$ and integer $k\ge 1$ (see~\cite{Hungerford74}). For a finite
field $F$, we denote $R(F)=(p, t_F(u))$ to represent $F$, where
$t_F(u)$ is a irreducible polynomial over field $\GF(p)$ for the
prime number $p$ and its degree is $\deg(t_F(.))=k$. The polynomial
$t_F(u)$ is equal to $u$ if $F$ is of size $p$, which is a prime
number. Each element of $F$ with $R(F)=(p, t_F(u))$ is a polynomial
$q(u)$ with degree less than the degree of $t_F(u)$. For two
elements $q_1(u)$ and $q_2(u)$ in $F$, their addition is defined by
$(q_1(u)+q_2(u))(\mod\ t_F(u))$,  and their multiplication is
defined by $(q_1(u)\cdot q_2(u))(\mod\ t_F(u))$
(see~\cite{Hungerford74}). Each element in $\GF(2^k)$ is a
polynomial $\sum_{i=0}^{k-1}b_iu^i$ ($b_i\in\{0,1\}$), which is
represented by a binary string $b_{k-1}\cdots b_0$ of length $k$.

We use $\GF(2^k)$ field in our randomized streaming algorithm for
hardness magnification . Let $F$ be a $\GF(2^k)$ field (a field of
size $q=2^k$) and has its $R(F)=(2, t_F(u))$. Let $s=a_0\cdots
a_{m-1}$ be a binary string of length $m$ with $m\le k$, and $u$ be
a variable. Define $w(s, u)$ to be the element $\sum_{i=0}^{m-1} a_i
u^i$ in $\GF(2^k)$.
Let $x$ be a string in $\{0,1\}^*$ and $k$ be an integer at least
$1$. Let $x=s_{r-1}s_{t-2}\cdots s_1s_0$ such that each $s_i$ is a
substring of $x$ of length $k$ for $i=1,2,\cdots, r-1$, and the
substring $s_0$ has its length $|s_0|\le k$. Each $s_i$ is called a
{\it $k$-segment} of $x$ for $i=0,1,\cdots, r-1$. Define the
polynomial $d_x(z)=z^r+\sum_{i=0}^{r-1}w(s_i, u) z^i$, which
converts a binary string into a polynomial in $\GF(2^k)$.

We develop a streaming algorithm that converts an input string into
an element in a finite field. We give the definition to characterize
the properties of the streaming algorithm developed in this paper.
Our streaming algorithm is to convert an input stream $x$ into an
element $d_x(a)\in F=\GF(2^k)$ by selecting a random element $a$
from $F$.

\begin{definition}
    Let $r_0(n), r_1(n), r_2(n), s(n), u(n)$ be nondecreasing functions from
    $\mathbb{N}$ to $\mathbb{N}$. Define $\streaming(r_0(n), r_1(n),  s(n),  u(n), r_2(n))$
    to be the class of languages $L$ that have one-pass streaming algorithms that
    has input $(n, x)$ with $n=|x|$ ($x$ is a string and read by
    streaming), it satisfies
    \begin{enumerate}
        \item
        It takes $r_0(n)$ time to generate a field $F=\GF(2^k)$, which is
        represented by $(2, t_F(.))$ with a irreducible polynomial $t_f(.)$ over
        $\GF(2)$ of degree $k$.
        \item
        It takes $O(r_1(n))$  random steps before reading the first bit from
        the input stream $x$.
        \item
        It uses $O(s(n))$ space that includes the space to hold the field
        representation generated by the algorithm. The space for a field
        representation is $\Omega( (\deg(t_F(.))+1))$ and $\bigO(
        (\deg(t_F(.))+1))$ for the irreducible polynomial $t_F(.)$ over
        $\GF(2)$.

        \item
        It takes $O(u(n))$ field conversions to elements in $F$ and $O(u(n))$
        field operations in $F$ after reading each bit.
        \item
        It runs $O(r_2(n))$ randomized steps after reading the entire input.

    \end{enumerate}
\end{definition}

\section{Overview of Our Methods}
In this section, we give a brief description about our methods used
in this paper. Our first result is based on a polynomial
method on a finite field whose size affects the hardness of
magnification. The second result is a translational method for
zero-error probabilistic complexity classes.

\subsection{Magnify  the Hardness of Sparse Sets}
We have a polynomial method over finite fields. Let $L$ be
$f(n)$-sparse language in $\NTIME(t_1(n))$. In order to handle an
input string of size $n$, a finite field $F=\GF(q)$ with $q=2^k$ for
some integer $k$ is selected, and is represented by $R(F)=(2,
t_F(z))$, where $t_F(z)$ is a irreducible polynomial over $\GF(2)$.
An input $y=a_1a_2\cdots a_n$ is partitioned into $k$-segments
$s_{r-1}\cdots s_1s_0$ such that each $s_i$ is converted into an
element $w(s_i, u)$ in $F$, and $y$ is transformed into an
polynomial $d_y(z)=z^r+\sum_{i=0}^{r-1}w(s_i, u) z^i$. A random
element $a\in F$ is chosen in the beginning of streaming algorithm
before processing the input stream. The value $d_y(a)$ is evaluated
with the procession of input stream. The finite $F$ is large enough
such that for different $y_1$ and $y_2$ of the same length,
$d_{y_1}(.)$ and $d_{y_2}(.)$ are different polynomials due to their
different coefficients derived from $y_1$ and $y_2$, respectively.
Let $H(y)$ be the set of all $\langle n, a, d_y(a)\rangle$ with
$a\in F$ and $n=|y|$. Set $A(n)$ is the union of all $H(y)$ with
$y\in L^n$. The set of $A$ is $\cup_{i=1}^{\infty}A(n)$. A small
lower bound for the language  $A$ is magnified to large lower bound
for $L$.

The size of field $F$ depends on the density of set $L$ and is
$\bigO(f(n)n)$. By the construction of $A$, if $y\in L$, there are
$q$ tuples $\langle n, a, d_y(a)\rangle$ in $A$ that are generated
by $y$ via all $a$ in $F$. For two different $y_1$ and $y_2$ of
length $n$, the intersection $H(y_1)\cap H(y_2)$ is bounded by the
degree of $d_{y_1}(.)$. If $y\not\in L$, the number of items
$\langle n, a, d_y(a)\rangle$ generated by $y$ is at most ${q\over
    4}$ in $A$. If $y\in L$, the number of items $\langle n, a,
d_y(a)\rangle$ generated by $y$ is $q$ in $A$.  This enables us to
convert a string $x$ of length $n$ in $L$ into some strings in $A$
of length much smaller than $n$, make the hardness magnification
possible.

\subsection{Separation by \ZPP-hardness of \MCSP}

Our another result shows that \ZPP-hardness for \MCSP~implies
$\EXP\not=\ZPP$. We identify a class of functions that are padding
stable, which has the property if $T\in \tally(d(n), g(n))$, then
$\{1^{n+2^n}: 1^n\in T\}\in \tally(d(n), g(n))$. The function pair
$d(n)=\log\log n$ and $g(n)=2^{2^{2n}}$ has this property.
We construct a very sparse tally set $L\in \EXP\cap
\tally(d(n), g(n))$ that separates $\ZEP$ from $\ZPP$, where $\ZEP$
is the zero error exponential time probabilistic  class. It is based
on a diagonal method that is combined with a padding design. A tally
language $L$ has a zero-error $2^{2^n}$-time probabilistic algorithm
implies $L'=\{1^{n+2^n}: 1^n\in L\}$ has a zero-error $2^n$-time
probabilistic algorithm. Adapting to the method
of~\cite{MurrayWilliams17}, we prove that if \MCSP~is $\ZPP\cap
\tally(d(n), g(n))$-hard under polynomial time truth-table
reductions, then $\EXP\not=\ZPP$.

\section{Hardness Magnification via Streaming }
In this section, we show a hardness magnification of sparse sets via
a streaming algorithm. A classical algorithm to find irreducible
polynomial~\cite{shoup88} is used to construct a field that is large
enough for our algorithm.

\begin{theorem}\cite{shoup88}\label{det-irreducible-thm}
    There is a deterministic algorithm that constructs a irreducible
    polynomial of degree $n$ in $\bigO(p^{1\over 2}(\log p)^3
    n^{3+\epsilon}+(\log p)^2 n^{4+\epsilon})$ operations in $F$, where
    $F$ is a finite field $\GF(p)$ with prime number $p$.
\end{theorem}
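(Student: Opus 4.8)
The plan is to construct a presentation of the field $\GF(p^{n})$ --- equivalently, an irreducible polynomial of degree $n$ over $\GF(p)$ --- by a recursion that removes one prime factor of the degree at a time, finding at each step an irreducible polynomial of \emph{prime} degree over the field built so far. Write $n=\ell_{1}\ell_{2}\cdots\ell_{t}$ as a product of (not necessarily distinct) primes, so $t=\bigO(\log n)$. Suppose a finite field $K=\GF(p^{m})$ with $m=\ell_{1}\cdots\ell_{j-1}$ is already presented by an irreducible polynomial over $\GF(p)$ together with the arithmetic of $K$. I would produce an irreducible polynomial of degree $\ell_{j}$ over $K$, then run a primitive-element step: pick a $\GF(p)$-generator $\alpha$ of the degree-$m\ell_{j}$ extension, compute the powers $1,\alpha,\ldots,\alpha^{m\ell_{j}}$ as vectors over $\GF(p)$, and solve one linear system for the minimal polynomial of $\alpha$ over $\GF(p)$. (Pairwise coprime factors of $n$ may instead be glued in one shot via a sum of roots.) There are only $\bigO(\log n)$ levels, and the polynomial arithmetic modulo a degree-$\bigO(n)$ polynomial and the $\bigO(n^{3})$-size linear algebra performed there are what will produce the $n^{3+\epsilon}$ and $n^{4+\epsilon}$ factors of the stated bound, once the prime-degree subroutine is in hand.

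The prime-degree subroutine --- build a degree-$\ell$ irreducible over $\GF(q)$ with $q=p^{m}$ --- splits into two cases. If $\ell=p$, Artin--Schreier theory settles it with no search: $x^{p}-x-a$ is irreducible over $\GF(q)$ iff $\mathrm{Tr}_{\GF(q)/\GF(p)}(a)\neq 0$, and some fixed basis element has nonzero trace. If $\ell\neq p$, I would use binomials together with Capelli's criterion: for any field $L$, $x^{\ell}-a$ is irreducible over $L$ iff $a$ is not an $\ell$-th power in $L$. When $\ell\mid q-1$, the $\ell$-th powers form an index-$\ell$ subgroup of $\GF(q)^{*}$, so a constant fraction of elements work and it suffices to exhibit one; when $\ell$ does not divide $q-1$ one first passes to a cyclotomic extension of small degree $\mathrm{ord}_{\ell}(q)\mid \ell-1$ so that the requisite $\ell$-th roots of unity are available, performs the construction there, and extracts the degree-$\ell$ subfield by a further minimal-polynomial computation --- the analysis of this sub-case is more delicate but contributes only a polynomial-in-$n$ overhead beyond the irreducibility tests. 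Every candidate $f$ is certified irreducible deterministically by the Frobenius criterion, which for prime degree $\ell$ reduces to $x^{q^{\ell}}\equiv x\pmod{f}$ together with $\gcd(x^{q}-x,f)=1$, at a cost of $\bigO(\ell\log q)$ multiplications of degree-$<\ell$ polynomials over $\GF(q)$ per test.

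The one genuinely hard point --- and the source of the $p^{1/2}$ in the bound --- is producing the non-$\ell$-th-power witness \emph{deterministically and unconditionally}. Testing a given field element for being an $\ell$-th power costs a single exponentiation, so the real question is how far one must search before hitting a non-$\ell$-th-power; running through the small elements $2,3,4,\ldots$ of the prime field (lifted into the working field) and testing each, a P\'olya--Vinogradov / Weil-type estimate on the relevant multiplicative character sum guarantees a witness within the first $\bigO(\sqrt p)\cdot(\log p)^{\bigO(1)}$ of them, and folding in the cost of the exponentiations and of arithmetic over $\GF(p^{m})$ yields the $p^{1/2}(\log p)^{3}n^{3+\epsilon}$ term. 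The rest is bookkeeping: multiply the prime-degree cost over the $\bigO(\log n)$ tower levels, add the primitive-element linear algebra, and collect everything into $\bigO(p^{1/2}(\log p)^{3}n^{3+\epsilon}+(\log p)^{2}n^{4+\epsilon})$. I expect this unconditional deterministic search for the non-residue-type witness --- together with the careful handling of prime factors of $n$ that fail to divide $q-1$ --- to be the main obstacle; the tower recursion, the Artin--Schreier and Capelli criteria, and Frobenius-based irreducibility testing are all classical.
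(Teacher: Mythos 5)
The paper offers no proof of this statement: it is imported as a black box from Shoup's 1988 paper, so there is nothing internal to compare your argument against. Judged against the actual known proof, your sketch reproduces its architecture faithfully: reduce to prime(-power) degrees and glue coprime pieces by a minimal-polynomial/primitive-element computation, use Artin--Schreier when $\ell=p$, use binomials $x^{\ell}-a$ with Capelli's criterion when $\ell\neq p$, detour through the cyclotomic extension of degree $d=\mathrm{ord}_{\ell}(p)$ when $\ell$ does not divide $q-1$, certify candidates by the Frobenius criterion, and pay the $p^{1/2}$ only for the deterministic search for a non-$\ell$-th-power, which rests on Weil-type character-sum estimates. That is exactly where the two terms of the stated bound come from.

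One concrete step fails as written. You propose to find the non-$\ell$-th-power witness by testing the lifted prime-field elements $2,3,4,\ldots$ inside the working field $\GF(p^{d})$. When $\ell$ does not divide $p-1$ (i.e., $d>1$), this search can never succeed: $\GF(p)^{*}$ has order $p-1$ prime to $\ell$, so the map $x\mapsto x^{\ell}$ is a bijection on $\GF(p)^{*}$ and every prime-field element is already an $\ell$-th power there, hence in $\GF(p^{d})$. The witness must genuinely involve the extension generator; Shoup searches the translates $\alpha+j$ for $j=0,1,\ldots,m$ of a generator $\alpha$ of $\GF(p^{d})$ over $\GF(p)$ and bounds the associated multiplicative character sum via Weil's theorem (applied through the minimal polynomial of $\alpha$), which guarantees a nonresidue among the first $p^{1/2}$ times $\poly(d,\log p)$ translates. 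With that substitution --- and the routine descent from $\GF(p^{d\ell})$ back to a degree-$\ell$ polynomial over the base, which works because $\gcd(d,\ell)=1$ --- your outline matches the cited proof and its operation count.
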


\begin{definition}
    Let $f(n)$ be a function from $N$ to $N$. For a language $A\subseteq
    \{0,1\}^*$, we say $A$ is $f(n)$-sparse if $|A^n|\le f(n)$ for all
    large integer $n$.
\end{definition}

\subsection{Streaming Algorithm}


The algorithm Streaming(.) is based on a language $L$ that is
$f(n)$-sparse. It generates a field $F=\GF(2^k)$ and evaluates
$d_x(a)$ with a random element $a$ in $F$. A polynomial
$z^r+\sum_{i=0}^{r-1}
b_iz^i=z^r+b_{r-1}z^{r-1}+b_{r-2}z^{r-2}+\cdots+b_0$ can be
evaluated by $(\cdots ((z+b_{r-1})z+b_{r-2})z+...)z+b_0$ according
to the classical Horner's algorithm. For example,
$z^2+z+1=(z+1)z+1$.


{\bf Algorithm}

Streaming($n, x$)

Input: an integer $n$, and string $x=a_1\cdots a_n$ of the length
$n$;

Steps:

\begin{enumerate}[1.]

    \item\label{selecting-k-line}
    Select a field size $q=2^k$ such that $8f(n)n<q\le 16f(n)n$.


    \item\label{generate_field_line}
    Generate an irreducible polynomial $t_F(u)$ of degree $k$ over
    $\GF(2)$ such that $(2, t_F(u))$ represents finite $F=\GF(q)$ (by
    Theorem~\ref{det-irreducible-thm} with $p=2$);

    \item
    Let $a$ be a random element in $F$;

    \item
    Let $r=\ceiling{n\over k}$; (Note that $r$ is the number of
    $k$-segments of $x$. See Section~\ref{notation-sec})

    \item
    Let $j=r-1$;

    \item
    Let $v=1$;

    \item
    Repeat

    \item
    \{

    \item
    \hskip 20pt Receive the next $k$-segment $s_j$ from  the input stream
    $x$;

    \item\label{convert-to-field-line}
    \hskip 20pt Convert $s_j$ into an element $b_j=w(s_j, u)$ in $\GF(q)$;

    \item\label{next-substring-line}
    \hskip 20pt Let $v=v\cdot a +b_j$;

    \item
    \hskip 20pt Let $j=j-1$;

    \}

    \item
    Until $j<0$ (the end of the stream);




    \item
    Output $\langle n, a, v\rangle$;

\end{enumerate}

{\bf End of Algorithm}


Now we have our magnification  algorithm. Let $M(.)$ be a randomized
Turing machine to accept a language $A$ that contains all $\langle
|x|, a, d_x(a)\rangle$ with $a\in F$ and $x\in L$. We have the
following randomized streaming algorithm to accept $L$ via the
randomized algorithm $M(.)$  for $A$.


{\bf Algorithm}

Magnification($n,x$)

Input integer $n$ and $x=a_1\cdots a_n$ as a stream;

Steps:
    Let $y=$Streaming($n,x$);     Accept if $M(y)$ accepts;


{\bf End of Algorithm}


\subsection{Hardness Magnification }\label{magnification -sec}

In this section, we derive some results about hardness magnification
via sparse set. Our results show a trade off between the hardness
magnification  and sparseness via the streaming model.

\begin{definition}
        For a nondecreasing function $t(.):\mathbb{N}\rightarrow \mathbb{N}$, define
        $\PrTime(t(n))$ the class of languages $L$ that have
        two-side  bounded error probabilistic algorithms with time complexity
        $O(t(n))$.
      Define  $\BPP=\cup_{c=1}^{\infty}\PrTime(n^c)$.
\end{definition}

\begin{theorem}\label{main-thm}Assume that $u_1(m)$ be nondecreasing function for the time to
    generate an irreducible polynomial of degree $m$ in $\GF(2)$, and
    $u_2(m)$ be the nondecreasing function of a time upper bound for
    the operations ($+, .$) in
    $\GF(2^m)$.
    Let $f(.), t_1(.), t_2(.), t_3(n)$ be nondecreasing functions $\mathbb{N}\rightarrow \mathbb{N}$ with
    $f(n)\le 2^{n\over 2}$, $v(n)=(\log n+\log f(n))$, and $10v(n)+t_1(n)+u_1(10v(n))+n\cdot
    u_2(10v(n)) \le t_2(v(n))$
    for all large $n$.
    If there is a $f(n)$-sparse set $L$ with $L\in \NTIME(t_1(n))$ and
    $L\not\in \streaming(u_1(10v(n))), v(n), v(n), 1,t_3(10 v(n)))$,
    then there is a language $A$ such that $A\in \NTIME(t_2(n))$ and
    $A\not\in\PrTime(t_3(n))$.
\end{theorem}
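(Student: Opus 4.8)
The plan is to prove the contrapositive in a structured way: assuming $A \in \PrTime(t_3(n))$ via some bounded-error machine $M$, I will build a randomized streaming algorithm for $L$ with the required resource bounds, contradicting the hypothesis $L \not\in \streaming(u_1(10v(n)), v(n), v(n), 1, t_3(10v(n)))$. First I would pin down the construction of $A$ itself: on input length $n$, let $q = 2^k$ with $8f(n)n < q \le 16f(n)n$ (so $k \le 10v(n)$ for large $n$, using $v(n) = \log n + \log f(n)$ and $f(n) \le 2^{n/2}$), take the field $F = \GF(q)$, and set $A = \{\langle n, a, d_x(a)\rangle : x \in L^n,\ a \in F\}$. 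The first real step is to verify $A \in \NTIME(t_2(n))$: a nondeterministic machine on input $\langle n, a, c\rangle$ guesses $x \in \{0,1\}^n$, verifies $x \in L$ using the $\NTIME(t_1(n))$ machine, reconstructs $F$ by generating the irreducible polynomial ($u_1(10v(n))$ steps), and checks $c = d_x(a)$ by Horner evaluation ($n \cdot u_2(10v(n))$ field operations). The total, measured as a function of the input length $|\langle n,a,c\rangle| = \Theta(v(n))$, is at most $t_2(v(n))$ by the hypothesized inequality $10v(n) + t_1(n) + u_1(10v(n)) + n\cdot u_2(10v(n)) \le t_2(v(n))$; care is needed to match "input length $v(n)$" to the claimed bound $t_2(n)$ — one recovers $n$ from the first component of the tuple, and monotonicity of $t_2$ does the rest.

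Next I would assemble the streaming algorithm for $L$. Run Streaming$(n,x)$: this selects $k$, generates $t_F(u)$ in time $u_1(10v(n))$, picks a random $a \in F$ (that is $O(v(n))$ random bits, consumed before reading the stream, matching the $r_1 = v(n)$ slot), and with one pass over $x$ — $O(1)$ field conversions and $O(1)$ field operations per bit, matching $u(n) = 1$ and using $O(v(n))$ space — computes $v = d_x(a)$ by Horner's rule. Then output $\langle n, a, v\rangle = y$ and simulate $M(y)$; $|y| = \Theta(v(n))$ so $M$ runs in $t_3(O(v(n))) = t_3(10v(n))$ steps, matching the $r_2$ slot. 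The space bound $v(n)$ holds because we store only $a$, $v$, the polynomial $t_F(u)$, and a counter, each $O(k+1) = O(v(n))$.

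The heart of the matter is the correctness of this randomized reduction, and this is the step I expect to require the most care. If $x \in L^n$, then $\langle n, a, d_x(a)\rangle \in A$ for every $a$, so $M$ accepts with probability $\ge 2/3$ over $M$'s coins regardless of $a$ — fine. If $x \notin L^n$, I must show that for a random $a \in F$, the tuple $\langle n, a, d_x(a)\rangle$ lies in $A$ only with small probability, so that the streaming algorithm rejects with good probability. The point is that $\langle n, a, d_x(a)\rangle \in A$ iff there exists $x' \in L^n$ with $d_{x'}(a) = d_x(a)$; since $d_x$ and $d_{x'}$ are distinct monic polynomials of degree $r = \lceil n/k\rceil$ over $F$ (distinct because their coefficients encode $x \ne x'$, and the segment map $w(\cdot,u)$ is injective for segments of length $\le k$), they agree on at most $r$ points. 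Union-bounding over the at most $f(n)$ strings $x' \in L^n$, the probability over random $a$ that $d_x(a)$ collides with some $d_{x'}(a)$ is at most $f(n)\, r / q \le f(n)\,(n/k+1)/(8f(n)n) \le 1/4$ for large $n$. Conditioning on the complement of this bad event (probability $\ge 3/4$), $M(y)$ rejects with probability $\ge 2/3$; so overall the streaming algorithm has bounded error, and standard amplification (a constant number of independent repetitions, still within the resource budget) pushes the error below $1/3$. This places $L$ in $\streaming(u_1(10v(n)), v(n), v(n), 1, t_3(10v(n)))$ — up to the constant factors absorbed by the $O(\cdot)$'s in the definition — contradicting the hypothesis. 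The delicate points to get right are: the exact relationship between $k$, $q$, and $v(n)$; ensuring the $O(\cdot)$ slack in the $\streaming$ definition absorbs amplification and the constant-factor overheads; and confirming the degree-vs-collision count genuinely beats $q$ with the chosen field size.
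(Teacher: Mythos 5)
Your proposal is correct and follows essentially the same route as the paper: the same construction of $A$ from the tuples $\langle n,a,d_x(a)\rangle$, the same $\NTIME(t_2)$ verification by guessing $x\in L^n$ and re-evaluating $d_x(a)$, and the same collision bound (distinct monic degree-$r$ polynomials agree on at most $r$ points, union-bounded over the $f(n)$ strings of $L^n$ against $q>8f(n)n$). Your direct union-bound phrasing of the soundness claim and your explicit handling of the composed error via amplification are, if anything, slightly cleaner than the paper's contradiction-style argument, but the substance is identical.
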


\begin{proof}
    Select a finite field $\GF(q)$ with $q=2^k$ for  an integer $k$ by line~\ref{selecting-k-line} of the algorithm streaming(.).
    For each $x\in L^n$,
    let $x$ be partitioned
    into $k$-segments: $s_{r-1}s_{r-2}\cdots s_0$.  Let $w(s_i,u)$
    convert $s_i$ into  an element of $\GF(q)$ (See
    Section~\ref{notation-sec}). Define polynomial
    $d_x(z)=z^r+\sum_{i=0}^{r-1}w(q,s_i) z^i$. For each $x$, let $H(x)$ be
    the set $\{\langle n, a, d_x(a)\rangle |a\in \GF(q)\}$, where
    $n=|x|$. Define set $A(n)=\cup_{y\in L^n}H(y)$ for $n=1,2\cdots$,
    and language $A=\cup_{n=1}^{+\infty}A(n)$.


    {\bf Claim 1.} For any $x\not\in L^n$ with $n=|x|$, we have
    $|H(x)\cap A(n)|< {q\over 4}$.

    \begin{proof}
        Assume that for some $x\not\in L^n$ with $n=|x|$, $|H(x)\cap
        A(n)|\ge {q\over 4}$. It is easy to see that $r\le n$ and $k\le n$
        for all large $n$ by the algorithm Streaming(.) and the condition of
        $f(.)$ in the theorem. Assume that $|H(x)\cap H(y)|<r+1$ for every
        $y\in L^n$. Since $A(n)$ is the union  $H(y)$ with $y\in L^n$ and
        $|L^n|\le f(n)$, there are at most $rf(n)\le nf(n)<{q\over 8}$
        elements in $H(x)\cap A(n)$ by line~\ref{selecting-k-line} of the
        algorithm Streaming(.). Thus, $|H(x)\cap A(n)|< {q\over 8}$. This
        brings a contradiction. Therefore, there is a $y\in L^n$ to have
        $|H(x)\cap H(y)|\ge r+1$. Since the polynomials $d_x(.)$ and
        $d_y(.)$ are of degrees at most $r$, we have $d_x(z)=d_y(z)$ (two polynomials are equal). Thus,
        $x=y$. This brings a contradiction because $x\not\in L^n$ and $y\in
        L^n$.
    \end{proof}


    {\bf Claim 2.} If  $x\in L$, then Streaming$(|x|,x)\in A$.
    Otherwise, with probability at most ${1\over 4}$, Streaming$(|x|,
    x)\in A$.

    \begin{proof}
        For each $x$, it generates $\langle  n,  a, d_x(a)\rangle $ for a
        random $a\in \GF(q)$.  Each $a\in \GF(q)$ determines a random path. We
        have that if $x\in L$, then $\langle  n,  a, d_x(a)\rangle \in A$,
        and if $x\not\in L$, then $\langle  n,  a, d_x(a)\rangle \in A$ with
        probability at most ${1\over 4}$ by Claim 1.
    \end{proof}


    {\bf Claim 3.} $A\in \NTIME(t_2(m))$.

    \begin{proof}
        Let $z=10v(n)=10(\log n+\log f(n))$. Each element in field $F=\GF(2^k)$ is of length $k$. For each $u=\langle n, a,b\rangle$ ($a,b\in F$), we
        need to guess a string $x\in L^n$ such that $b=d_x(a)$. It is easy
        to see that $v(n)\le |\langle n, a, b\rangle |\le 10v(n)$ for all large $n$ if
        $\langle n, a, b\rangle \in A$ (See Section~\ref{notation-sec} about coding). Let $m=|\langle n, a, b\rangle |$.
        It takes at most $u_1(z)$ steps to generate a irreducible polynomial
        $t_F(.)$ for the field $F$ by our assumption.

        Since $L\in \NTIME(t_1(n))$, checking if $u\in A$ takes
        nondeterministic $t_1(n)$ steps to guess a string $x\in L^n$, $u_1(z)$ deterministic steps to generate $t_F(u)$ for the field $F$,
        $O(z)$ nondeterministic steps to generate a random element $a\in F$, and
        additional $\bigO(n\cdot u_2(z))$ steps to evaluate $d_x(a)$ in by following algorithm
        Streaming(.) and check $b=d_x(a)$. The polynomial $t_F(u)$ in the
        $\GF(2)$ has degree at most $z$. Each polynomial
        operation ($+$ or $.$) in $F$ takes at most $u_2(z)$ steps. Since $z+t_1(n)+u_1(z)+n\cdot u_2(z)\le t_2(m)$ time by the condition of this theorem, we have
        $A\in\NTIME(t_2(m))$.


    \end{proof}

    {\bf Claim 4.} If $A\in\PrTime(t_3(m))$, then $L\in \streaming(u_1(10v(n)),v(n), v(n),$ $1,  t_3(10v(n)))$.

    \begin{proof}
        The field generated at line~\ref{generate_field_line} in algorithm
        Streaming(.) takes $u_1(10(\log n+\log f(n)))$ time. 
        Let $x=a_1\cdots a_n$ be the input string. The string $x$
        partitioned into $k$-segments $s_{r-1}\cdots s_0$. Transform each
        $s_i$ into an element $b_i=w(s_i, u)$ in $\GF(q)$ in the streaming
        algorithm. We generate a polynomial $d_x(z)=z^r+\sum_{i=0}^{r-1}
        b_iz^i=z^r+b_{r-1}z^{r-1}+b_{r-2}z^{r-2}+\cdots+b_0$. Given a random
        element $a\in \GF(q)$, we evaluate $d_x(a)=(\cdots
        ((a+b_{r-1})a+b_{r-2})a+...)a+b_0$ according to the classical
        algorithm. Therefore, $d_x(a)$ is evaluated in Streaming(.) with
        input $(|x|,x)$.

        If $A\in\PrTime(t_3(m))$, then $L$ has a randomized streaming
        algorithm that has  at most $t_3(10v(n))$ random steps
        after reading the input, and at most $\bigO(v(n))$ space.
        After reading one substring $s_i$ from $x$, it takes one conversion
        from a substring of the input to an element of field $F$ by line~\ref{convert-to-field-line}, and at most two field
        operations by line~\ref{next-substring-line} in the algorithm
        Streaming(.).

    \end{proof}

    Claim 4 brings a contradiction to our assumption about the
    complexity of $L$ in the theorem. This proves the theorem.
\end{proof}

\begin{proposition}\label{o-ineqn-proposition} Let $f(n):\mathbb{N}\rightarrow \mathbb{N}$ be a nondecreasing
    function. If for each fixed $\epsilon\in (0,1)$, $f(n)\le
    n^{\epsilon}$ for all large $n$, then there is a nondecreasing
    unbounded function $g(n):\mathbb{N}\rightarrow \mathbb{N}$ with $f(n)\le n^{1\over
        g(n)}$.
\end{proposition}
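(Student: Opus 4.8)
The plan is to build $g$ by extracting, for each prospective value $j$ of $g$, a threshold beyond which $f(n)\le n^{1/j}$, and then stitching these thresholds into a single monotone unbounded function. Concretely: for each integer $j\ge 2$, apply the hypothesis with $\epsilon=1/j\in(0,1)$ to obtain an $M_j\in\mathbb{N}$ with $f(n)\le n^{1/j}$ for all $n\ge M_j$. First I would replace each $M_j$ by $\max\{2,M_2,M_3,\dots,M_j,j\}$. Raising a threshold cannot destroy the implication $n\ge M_j \implies f(n)\le n^{1/j}$, so the property is preserved, and now in addition $2\le M_2\le M_3\le M_4\le\cdots$ and $M_j\ge j$, so $M_j\to\infty$. (Relabel the new thresholds as $M_j$.)

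Next I would define $g\colon\mathbb{N}\to\mathbb{N}$ by $g(n)=1$ for $n<M_2$ and $g(n)=\max\{\,j\ge 2 : M_j\le n\,\}$ for $n\ge M_2$. Since $M_j\to\infty$, this maximum is taken over a finite nonempty set, so $g$ is well defined and integer valued with $g(n)\ge 1$ everywhere. Monotonicity is immediate: if $n\le n'$ then $\{\,j\ge 2: M_j\le n\,\}\subseteq\{\,j\ge 2: M_j\le n'\,\}$, hence $g(n)\le g(n')$ (and the case $n<M_2\le n'$ is clear, as is $n\le n'<M_2$). Unboundedness is equally immediate: for any $J\ge 2$, every $n\ge M_J$ has $J$ in the defining set, so $g(n)\ge J$.

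It then remains to check the inequality $f(n)\le n^{1/g(n)}$. For $n\ge M_2$, write $j=g(n)\ge 2$; the definition gives $M_j\le n$, so the defining property of $M_j$ yields $f(n)\le n^{1/j}=n^{1/g(n)}$, as required. For the finitely many $n<M_2$ the inequality need not hold (e.g.\ at $n=1$ it would force $f(1)\le 1$), but it holds for all sufficiently large $n$ — indeed for all $n\ge M_2$ — which is all the proposition needs; if one wants, noting $f(1)=1$ when applicable recovers it for every $n$.

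I expect the only point requiring care — the "hard part", such as it is — to be the case in which $f$ is bounded. Then the raw thresholds $M_j$ produced by the hypothesis need not grow with $j$, and without $M_j\to\infty$ the set $\{\,j\ge 2: M_j\le n\,\}$ can be infinite, so $g$ is either ill defined or fails to be unbounded. The inflation step $M_j\mapsto\max(M_j,\dots,j)$ is exactly what fixes this; the routine verification I would include is simply that this inflation keeps $(M_j)$ nondecreasing and keeps the implication $n\ge M_j\implies f(n)\le n^{1/j}$ valid.
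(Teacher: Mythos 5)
Your proof is correct and follows essentially the same route as the paper's: extract a threshold for each exponent $1/j$ and let $g$ be the step function recording the largest exponent index whose threshold $n$ has passed. The one place you differ --- inflating $M_j$ to $\max\{2,M_2,\dots,M_j,j\}$ so that $M_j\to\infty$ --- is in fact a genuine improvement, since the paper simply asserts that its thresholds $n_k$ tend to infinity, which can fail for bounded $f$ (e.g., $f\equiv 1$ allows $n_k=1$ for all $k$), whereas your inflation step makes the construction work in that case as well.
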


\begin{proof}Let $n_0=1$. For each $k\ge 1$,
    let $n_k$ be the least integer such that $n_k\ge n_{k-1}$ and $f(n)\le
    n^{1\over k}$ for all $n\ge n_k$. Clearly, we have the infinite list
    $n_1\le n_2\cdots\le n_k\le\cdots$ such that $\lim_{k\rightarrow
        +\infty}n_k=+\infty$.  Define function $g(k):\mathbb{N}\rightarrow \mathbb{N}$ such
    that $g(n)=k$ for all $n\in [n_{k-1}, n_k)$. For each $n\ge n_k$, we
    have $f(n)\le n^{1\over k}$.
\end{proof}

Our Definition~\ref{o-ineqn-def} is based
Proposition~\ref{o-ineqn-proposition}.  It can simplify the proof
when we handle a function that is $n^{o(1)}$.

\begin{definition}\label{o-ineqn-def}
        A function $f(n):\mathbb{N}\rightarrow \mathbb{N}$ is $n^{o(1)}$ if there is a
        nondecreasing function $g(n):\mathbb{N}\rightarrow \mathbb{N}$ such that
        $\lim_{n\rightarrow +\infty}g(n)=+\infty$ and $f(n)\le n^{1\over
            g(n)}$ for all large $n$.
        A function $f(n):\mathbb{N}\rightarrow \mathbb{N}$ is $2^{n^{o(1)}}$ if there is a
        nondecreasing function $g(n):\mathbb{N}\rightarrow\mathbb{N}$ such that
        $\lim_{n\rightarrow +\infty}g(n)=+\infty$ and $f(n)\le 2^{n^{1\over
                g(n)}}$ for all large $n$.
\end{definition}


\begin{corollary}
    If there exists a $2^{n^{o(1)}}$-sparse language $L$ in
    $\NTIME(2^{n^{o(1)}})$ such that $L$ does not have any randomized
    streaming algorithm with $n^{o(1)}$ updating time, and $n^{o(1)}$
    space, then $\NEXP\not=\BPP$.
\end{corollary}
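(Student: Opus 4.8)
The plan is to derive the corollary from Theorem~\ref{main-thm} by instantiating the abstract parameters with the right choices and then packaging the ``$n^{o(1)}$'' hypotheses using Definition~\ref{o-ineqn-def}. First I would suppose, for contradiction, that $\NEXP = \BPP$, and let $L$ be the hypothesized $2^{n^{o(1)}}$-sparse language in $\NTIME(2^{n^{o(1)}})$ with no randomized streaming algorithm of $n^{o(1)}$ updating time and $n^{o(1)}$ space. By Definition~\ref{o-ineqn-def} there is a nondecreasing unbounded $g$ with sparseness bound $f(n) \le 2^{n^{1/g(n)}}$, and similarly a nondecreasing unbounded bound on the nondeterministic running time $t_1(n) \le 2^{n^{1/h(n)}}$; replacing both $g$ and $h$ by $\min(g,h)$ we may use a single witness function. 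Then $v(n) = \log n + \log f(n) \le \log n + n^{1/g(n)} = n^{o(1)}$.

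Next I would choose the remaining parameters of Theorem~\ref{main-thm}: take $t_2(n) = 2^{n^c}$ for a suitable constant $c$ (so that $A \in \NEXP$), and $t_3(n) = n^d$ for a suitable constant $d$ (so that ``$A \notin \PrTime(t_3(n))$ for all $d$'' will contradict $A \in \BPP$). The key inequality to verify is $10 v(n) + t_1(n) + u_1(10 v(n)) + n \cdot u_2(10 v(n)) \le t_2(v(n))$. Here $u_1$ is the irreducible-polynomial construction time from Theorem~\ref{det-irreducible-thm}, which with $p=2$ is polynomial in the degree, and $u_2$ is the field-operation time in $\GF(2^m)$, also polynomial in $m$; so the left side is dominated by $t_1(n) \le 2^{n^{1/g(n)}}$ plus a $\poly(n)$ factor, which is still of the form $2^{n^{o(1)}}$, whereas the right side is $t_2(v(n)) = 2^{v(n)^c} = 2^{(n^{o(1)})^c} = 2^{n^{o(1)}}$. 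The point is that $t_2$ need not dominate at the argument $n$, only at the argument $v(n)$; since $v(n)$ grows only like $n^{o(1)}$ while the left-hand side grows like $2^{n^{o(1)}}$, one must choose the exponent in $t_2$ large enough — but $t_2(v(n)) = 2^{v(n)^c}$ and $v(n)^c$ is still $o(n^{1/g(n)} \cdot \text{something})$; the honest check is that $2^{(2 n^{1/g(n)})^c}$ versus $2^{n^{1/g'(n)}}$ — for this I would pass to a slower-growing witness function, i.e. replace $g$ by a function $g'$ growing slowly enough that $v(n)^c \ge n^{1/g'(n)}$, which is possible precisely because $v(n) \to \infty$. This is the main obstacle: making the trade-off in the inequality work out requires re-choosing the unbounded witness function, and one has to be careful that the streaming-algorithm hypothesis on $L$ (stated with the original ``$n^{o(1)}$'') still rules out $\streaming(u_1(10v(n)), v(n), v(n), 1, t_3(10v(n)))$ with the new parameters, which it does since $u_1(10 v(n))$, $v(n)$, and $t_3(10 v(n)) = (10 v(n))^d$ are all $n^{o(1)}$.

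Finally, having arranged the parameters so that Theorem~\ref{main-thm} applies, I conclude there is a language $A \in \NTIME(t_2(n)) \subseteq \NEXP$ with $A \notin \PrTime(t_3(n))$. Since $t_3$ can be taken to be $n^d$ for each fixed $d$ (redoing the argument for each $d$, or noting the construction is uniform in $d$), this gives $A \notin \cup_d \PrTime(n^d) = \BPP$. Thus $\NEXP \neq \BPP$, contradicting our assumption, and the corollary follows. The only genuinely delicate point throughout is the bookkeeping of the nested ``$n^{o(1)}$'' quantities via Definition~\ref{o-ineqn-def}, ensuring a single nondecreasing unbounded function can simultaneously witness sparseness, running time, $v(n)$, and the slack needed in the $t_2(v(n))$ inequality; the rest is a direct substitution into Theorem~\ref{main-thm}.
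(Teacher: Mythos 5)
Your overall strategy is the paper's: instantiate Theorem~\ref{main-thm} with $f(n)=t_1(n)=2^{n^{1/g(n)}}$ via Definition~\ref{o-ineqn-def}, check the arithmetic inequality, rule out the streaming class from the hypothesis on $L$, and read off $A\in\NTIME(t_2(n))\subseteq\NEXP$ with $A\notin\PrTime(t_3(n))$. The parameter bookkeeping you describe for $t_2$ is workable (indeed the paper simply takes $t_2(n)=2^{2n}$ and uses $t_2(v(n))\ge t_2(\log f(n))=2^{2n^{1/g(n)}}\ge t_1(n)+n^{c_0}$, so no re-choice of the witness function is needed), but your handling of $t_3$ has a genuine gap.

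The gap is in the last step. You take $t_3(n)=n^d$ for a fixed $d$ and then claim that ``redoing the argument for each $d$'' yields $A\notin\bigcup_d\PrTime(n^d)=\BPP$. As stated this is invalid: each application of Theorem~\ref{main-thm} only asserts the \emph{existence} of some language $A_d\in\NTIME(t_2(n))\setminus\PrTime(n^d)$, and these may be different languages for different $d$. The statement ``for every $d$ there exists $A_d\in\NEXP\setminus\PrTime(n^d)$'' does not imply $\NEXP\neq\BPP$, since $\NEXP=\BPP$ is a priori consistent with $\NEXP$ containing languages of arbitrarily large polynomial probabilistic complexity (there is no hierarchy theorem for $\PrTime$ to appeal to). Your parenthetical fallback --- that the construction is uniform in $d$ --- would repair this, but it is not available from the statement of Theorem~\ref{main-thm}; one must open its proof and observe that the witness $A=\bigcup_n\bigcup_{y\in L^n}H(y)$ depends only on $L$ and the field construction, not on $t_3$. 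The paper avoids the issue entirely by choosing a \emph{single} superpolynomial bound $t_3(n)=n^{\sqrt{g(n)}}$, tuned so that $t_3(10v(n))\le n^{2/\sqrt{g(n)}}=n^{o(1)}$ (so the streaming hypothesis still applies) while $t_3(n)$ eventually dominates every $n^d$ (so one application of the theorem gives $A\notin\PrTime(n^d)$ for all $d$, hence $A\notin\BPP$). This balancing of $t_3$ is the one nontrivial parameter choice in the corollary, and it is the piece your write-up is missing.
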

\begin{proof}
    Let $g(n):\mathbb{N}\rightarrow \mathbb{N}$ be an arbitrary unbounded nondecreasing
    function that satisfies $\lim_{n\rightarrow+\infty}g(n)=+\infty$ and
    $g(n)\le \log\log n$. Let $t_1(n)=f(n)=2^{n^{1\over g(n)}}$ and Let
    $t_2(n)=2^{2n}$, $t_3(n)=n^{\sqrt{g(n)}}$, and $v(n)=(\log
    n+\log f(n))$.

    It is easy to see that $v(n)=n^{\littleo(1)}$, and both $u_1(n)$
    and $u_2(n)$ are $n^{\bigO(1)}$ (see Theorem~\ref{det-irreducible-thm}). For any fixed $c_0>0$, we have
    $t_2(v(n))>t_2(\log f(n))\ge t_2(n^{1\over
        g(n)})>t_1(n)+n^{c_0}$ for all large $n$.  For all large $n$, we have
    \begin{eqnarray}
    t_3(10v(n))&\le& t_3(20\log f(n))=t_3(20n^{1\over g(n)})\\
    &\le& (20n^{1\over g(n)})^{\sqrt{g(20n^{1\over g(n)}})}   \le (n^{2\over g(n)})^{\sqrt{g(n)}}=n^{o(1)}.\label{t3-f-ineqn}
    \end{eqnarray}
Clearly, these functions
   satisfy the inequality of the precondition in
    Theorem~~\ref{main-thm}.
    Assume $L\in \streaming(\poly(v(n)), v(n), v(n), 1, t_3(10v(n)))$.
    With $O(v(n))=n^{o(1)}$ space, we have a field representation $(2, t_F(.))$
    with $\deg(t_F(.))=n^{o(1)}$. Thus, each field operation takes
    $n^{o(1)}$ time by the brute force method for polynomial
    addition and multiplication. We have $t_3(10v(n))=n^{o(1)}$ by inequality (\ref{t3-f-ineqn}). Thus, the
    streaming algorithm updating time is $n^{o(1)}$. Therefore,
    we have that $L$ has a randomized streaming algorithm with
    $n^{o(1)}$ updating time, and $n^{o(1)}$ space. This gives a
    contradiction. So, \\$L\not\in \streaming(\poly(v(n)),
    v(n), v(n), 1,t_3(10v(n)))$. By
    Theorem~\ref{main-thm}, there is $A\in \NTIME(t_2(n))$ such that
    $A\not\in\PrTime(t_3(n))$. Therefore, $A\not\in \BPP$. Thus,
    $\NEXP\not=\BPP$.
\end{proof}

\section{Implication of \ZPP-Hardness of~\MCSP}

In this section, we show that if \MCSP~is $\ZPP\cap\tally$-hard,
then $\EXP\not=\ZPP$. The conclusion still holds if $\tally$~is
replaced by a very sparse subclass of $\tally$ languages.


\begin{definition}
        For a nondecreasing function $t(.):\mathbb{N}\rightarrow \mathbb{N}$, define
        $\ZPrTime(t(n))$ the class of languages $L$ that have   zero-error
        probabilistic algorithms with time complexity $O(t(n))$.
        Define $\ZPP=\cup_{c=1}^{\infty}\ZPrTime(n^c)$, and\\    $\ZEP=\cup_{c=1}^{\infty}\ZPrTime(2^{n^c})$.
\end{definition}

\begin{definition}
        For an nondecreasing function $f(n):\mathbb{N}\rightarrow \mathbb{N}$, define $\tally[f(k)]$ to be the
        class of tally set $A\subseteq \{1\}^*$ such that for each $1^m\in
        A$, there is an integer $i\in \mathbb{N}$ with $m=f(i)$.
        For a tally language $T\subseteq \{1\}^*$, define
        $\Pad(T)=\{1^{2^n+n}| 1^n\in T\}$.
\end{definition}

\begin{definition}
    For two languages $A$ and $B$, a polynomial time {\it truth-table
        reduction} from $A$ to $B$ is a polynomial time computable function
    $f(.)$ such that for each instance $x$ for $A$, $f(x)=(y_1,\cdots,
    y_m, C(.))$ to satisfy $x\in A$ if and only if $C(B(y_1),\cdots,
    B(y_m))=1$, where $C(.)$ is circuit of $m$ input bits and $B(.)$ is
    the characteristic function of $B$.

    Let $\le_r^P$ be a type of polynomial time reductions ($\le^P_{tt}$
    represents polynomial time truth-table reductions), and $C$ be a
    class of languages. A language $A$ is $C$-hard under $\le_r^P$
    reductions if for each $B\in C$, $B\le_r^P A$.
\end{definition}

\begin{definition}
    Let $k$ be an integer. Define two classes of functions with
    recursions:
      (1)  $\log^{(1)}(n)=\log_2 n$, and
        $\log^{(k+1)}(n)=\log_2(\log^{(k)}(n))$.
        (2) $\ex^{(1)}(n)=2^n$, and $\ex^{(k+1)}(n)=2^{\ex^{(k)}(n)}$.
\end{definition}

\begin{definition}
    For two nondecreasing functions $d(n), g(n):\mathbb{N}\rightarrow \mathbb{N}$, the
    pair $(d(n), g(n))$ is {\it time constructible} if $(d(n), g(n))$
    can be computed in time $d(n)+g(n)$ steps.
\end{definition}

\begin{definition}
    Define $\tally(d(n), g(n))$ to be the class of tally sets $T$ such
    that $|T^{\le n}|\le d(n)$ and for any two strings $1^n, 1^m\in T$
    with $n<m$, they satisfy $g(n)<m$. We call $d(n)$ to be the {\it
        density function} and $g(n)$ to be the {\it gap function}. A gap
    function $g(n)$ is {\it padding stable} if $g(2^n+n)<2^{g(n)}+g(n)$
    for all $n>1$.
\end{definition}

\begin{lemma}\label{padding-lemma}\scrod
    \begin{enumerate}
        \item\label{part1}
        Assume the gap function $g(n)$ is padding stable. If $T\in
        \tally(d(n), g(n))$, then $\Pad(T)\in \tally(d(n), g(n))$.
        \item\label{part2}
        For each integer $k>0$, $g(n)=\ex^{(k)}({2n})$ is padding stable.
    \end{enumerate}
\end{lemma}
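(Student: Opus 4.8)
The plan is to prove the two parts of Lemma~\ref{padding-lemma} separately, starting with the structural claim in part~\ref{part1} and then verifying the concrete padding-stability condition in part~\ref{part2}.

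\medskip\noindent\textbf{Part 1.}
First I would unwind the definitions. Suppose $g$ is padding stable and $T\in\tally(d(n),g(n))$, so $|T^{\le n}|\le d(n)$ for all $n$ and any two elements $1^n,1^m\in T$ with $n<m$ satisfy $g(n)<m$. Recall $\Pad(T)=\{1^{2^n+n}:1^n\in T\}$. The map $n\mapsto 2^n+n$ is strictly increasing, so it is a bijection between $T$ (as a set of lengths) and $\Pad(T)$ (as a set of lengths). For the density bound: since $2^n+n\ge n$, every string $1^{2^k+k}\in\Pad(T)^{\le n}$ comes from some $1^k\in T$ with $k\le 2^k+k\le n$, hence $k\le n$ and $1^k\in T^{\le n}$; therefore $|\Pad(T)^{\le n}|\le|T^{\le n}|\le d(n)$. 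For the gap condition: take two elements of $\Pad(T)$, say $1^{2^n+n}$ and $1^{2^m+m}$ with $2^n+n<2^m+m$; monotonicity of $k\mapsto 2^k+k$ gives $n<m$, so $1^n,1^m\in T$ forces $g(n)<m$, and since $g$ is nondecreasing and padding stable we get $g(2^n+n)<2^{g(n)}+g(n)\le 2^m+m$ (using $g(n)<m$, i.e.\ $g(n)\le m-1$, so $2^{g(n)}+g(n)\le 2^{m-1}+m-1<2^m+m$). Hence the gap condition holds for $\Pad(T)$ with the same gap function $g$, so $\Pad(T)\in\tally(d(n),g(n))$.

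\medskip\noindent\textbf{Part 2.}
Here I would fix $k>0$ and set $g(n)=\ex^{(k)}(2n)$, and show $g(2^n+n)<2^{g(n)}+g(n)$ for all $n>1$. The cleanest route is to prove the stronger bound $g(2^n+n)\le 2^{g(n)}$ for $n>1$, which immediately gives the required strict inequality since $2^{g(n)}<2^{g(n)}+g(n)$. Unwinding, $g(2^n+n)=\ex^{(k)}(2^{n+1}+2n)$ and $2^{g(n)}=2^{\ex^{(k)}(2n)}$. I would induct on $k$. For $k=1$: $\ex^{(1)}(2^{n+1}+2n)=2^{2^{n+1}+2n}$ and $2^{\ex^{(1)}(2n)}=2^{2^{2n}}$; since for $n>1$ we have $2^{n+1}+2n\le 2^{2n}$ (e.g.\ $n=2$: $8+4=12\le 16$, and the gap widens), this holds. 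For the inductive step, $\ex^{(k+1)}(2^{n+1}+2n)=2^{\ex^{(k)}(2^{n+1}+2n)}$ and I want this $\le 2^{\ex^{(k)}(2n)\cdot 1}=2^{\ex^{(k+1)}(2n)/1}$; more precisely I need $\ex^{(k)}(2^{n+1}+2n)\le \ex^{(k+1)}(2n)=2^{\ex^{(k)}(2n)}$, which follows by applying the $k=1$ comparison with argument $\ex^{(k-1)}$-scaled, or more simply by monotonicity together with the inductive hypothesis at level $k$. I would organize this as: show $\ex^{(k)}(2m)\ge$ (something large enough) and combine with $\ex^{(k)}$ being increasing and super-exponentially fast-growing so that feeding it the modest input $2^{n+1}+2n$ (which is $\le 2^{2n}=\ex^{(1)}(2n)$ for $n>1$) yields at most $\ex^{(k)}(\ex^{(1)}(2n))=\ex^{(k+1)}(2n)$, establishing $g(2^n+n)\le 2^{g(n)}$.

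\medskip\noindent\textbf{Main obstacle.}
The only delicate point is the bookkeeping in part~\ref{part2}: getting the base-case inequality $2^{n+1}+2n\le 2^{2n}$ to hold for all $n>1$ (it fails only for $n=1$, which is why the hypothesis says $n>1$) and then making sure the induction on the tower height $k$ composes correctly, i.e.\ that $\ex^{(k)}(2^{n+1}+2n)\le\ex^{(k+1)}(2n)$ — this is where one must be careful that the argument $2^{n+1}+2n$ stays below $\ex^{(1)}(2n)=2^{2n}$ so that one extra exponentiation absorbs it. Everything in part~\ref{part1} is routine definition-chasing once monotonicity of $n\mapsto 2^n+n$ is noted; the padding-stability hypothesis is used exactly once, in the gap estimate.
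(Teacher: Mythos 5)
Your proof is correct and follows essentially the same route as the paper's: part~1 is the same definition-chasing via monotonicity of $n\mapsto 2^n+n$ and a single application of padding stability to the gap condition, while part~2 reduces to the numeric bound $2^{n+1}+2n\le 2^{2n}$ for $n>1$ together with monotonicity of $\ex^{(k)}$ and the identity $\ex^{(k)}(2^{2n})=2^{\ex^{(k)}(2n)}=2^{g(n)}$. The induction on $k$ you sketch is unnecessary overhead — the paper obtains part~2 as one chain of inequalities from exactly these facts.
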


\begin{proof}
    Part~\ref{part1}. Let $1^n$ be a string in $T$. The next shortest string $1^m\in T$
    with $n<m$ satisfies $g(n)<m$. We have $1^{2^n+n}$ and $1^{2^m+m}$
    are two consecutive neighbor strings in $\Pad(T)$ such that there is
    no other string $1^k\in \Pad(T)$ with $2^n+n<k<2^m+m$. We have
    $g(2^n+n)<2^{g(n)}+g(n)<2^m+m$. Since the strings in $\Pad(T)^{\le
        n}$ are one-one mapped from the strings in $T$ with length less than
    $n$, $|\Pad(T)^{\le n}|\le |T^{\le n}|\le d(n)$, we have $\Pad(T)\in
    \tally(d(n), g(n))$. This proves Part~(\ref{part1}).

    Part~\ref{part2}. We have inequality $g(2^n+n)=\ex^{(k)}(2(2^n+n))<\ex^{(k)}(4\cdot
    2^n)=\ex^{(k)}(2^{n+2})$ $\le
    \ex^{(k)}(2^{2n})=2^{g(n)}<2^{g(n)}+g(n)$. Therefore, gap function
    $g(n)$ is padding stable. This proves Part~\ref{part2}.
\end{proof}

\begin{lemma}\label{f()-lemma}
    Let $d(n)$ and $g(n)$ be nondecreasing unbounded functions from $N$
    to $N$, and $(d(n), g(n))$ is time constructible. Then there exists
    a time constructible increasing unbounded function
    $f(n):\mathbb{N}\rightarrow \mathbb{N}$ such that \\$\tally[f(n)]\subseteq \tally(d(n),
    g(n))$.
\end{lemma}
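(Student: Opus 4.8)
The plan is to build $f$ by an explicit recursion engineered so that the single hypothesis ``all string lengths of $T$ lie in the range of $f$'' forces, by pure counting, both the density bound $d$ and the gap bound $g$ that define $\tally(d(n),g(n))$.

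First I would isolate the two properties of $f$ that are actually needed. Suppose $T\in\tally[f(n)]$. If $1^n,1^m\in T$ with $n<m$, then $n=f(i)$ and $m=f(j)$ for some $i,j$, and since $f$ is increasing we get $i<j$; hence the gap requirement $g(n)<m$ will follow once $f$ satisfies $g(f(i))<f(i+1)$ for every $i$, because then $g(f(i))\le g(f(j-1))<f(j)=m$ by monotonicity of $g$ and $f$. For the density requirement, observe that (as $T$ is tally and $f$ is injective) the map sending $1^\ell\in T^{\le n}$ to the unique index $i$ with $f(i)=\ell$ is injective into $\{i:f(i)\le n\}$, so $|T^{\le n}|\le|\{i:f(i)\le n\}|$; if this latter set has $k$ elements then $f(k)\le n$, so it suffices to arrange $d(f(i))\ge i$ for every $i$, since then $d(n)\ge d(f(k))\ge k\ge|T^{\le n}|$ (the case $T^{\le n}=\emptyset$ being trivial because $d$ is $\mathbb{N}$-valued).

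Next I would construct $f$ meeting these two conditions. Because $d$ is nondecreasing and unbounded, $D(i):=\min\{n:d(n)\ge i\}$ exists for every $i$ and can be computed by evaluating $d(1),d(2),\dots$ until a value $\ge i$ appears (this halts by unboundedness of $d$, and $d$ is computable since the pair $(d,g)$ is time constructible). Set
\[
f(1)=\max\{1,D(1)\},\qquad f(i+1)=\max\{\,f(i)+1,\ g(f(i))+1,\ D(i+1)\,\}.
\]
Then $f$ is strictly increasing and unbounded; by construction $g(f(i))<f(i+1)$ and $f(i)\ge D(i)$, hence $d(f(i))\ge i$; and $f$ is computable since each stage performs only evaluations of $g,d$ and a terminating search. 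Combining with the previous paragraph, every $T\in\tally[f(n)]$ lies in $\tally(d(n),g(n))$.

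Finally I would make $f$ genuinely time constructible. The recursion above is computable, but evaluating $g$ and $d$ may cost more than $\Theta(f(\cdot))$ steps, so I would have the procedure also track the number of steps it has used so far and include that count in the $\max$ defining $f(i+1)$; this makes $f$ grow at least as fast as its own running time without affecting either of the two conditions established above. I expect this last bookkeeping to be the only delicate point---the rest is a direct unwinding of the definitions of $\tally[f(n)]$ and $\tally(d(n),g(n))$.
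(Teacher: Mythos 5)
Your proposal is correct and follows essentially the same route as the paper's proof: an inductive definition of $f$ whose recursion takes a maximum over (a) a value forcing $g(f(i))<f(i+1)$ for the gap condition, (b) a value forcing $d(f(i))\ge i$ for the density condition, and (c) the running time of the construction so far to secure time constructibility. Your $D(i)=\min\{n: d(n)\ge i\}$ plays the same role as the paper's least indices $n_1<\cdots<n_k$ with $0<d(n_1)<\cdots<d(n_k)$, and the step-counting trick is exactly the paper's $\max(2s,\ldots)$ device.
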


\begin{proof}Compute the least integer $n_1$ with $d(n_1)>0$. Let $s_1$ be the number of steps for the computation. Define $f(1)=\max(s_1, n_1)$.
    Assume that $f(k-1)$ has been defined. We determine the function
    value $f(k)$ below.

    For an integer $k>0$, compute $g(f(k-1))$ and the least $k$ numbers $n_1<n_2<\cdots<n_k$ such that
    $0<d(n_1)<d(n_2)<\cdots<d(n_k)$. Assume the computation above takes
    $s$ steps. Define $f(k)$ to be the $\max(2s, n_k, g(f(k-1))+1)$. For
    each language $T\in \tally[f(n)]$, there are at most $k$ strings in
    $T$ with length at most $f(k)$. On the other hand, $d(n_k)\ge k$ by
    the increasing list $0<d(n_1)<d(n_2)<\cdots<d(n_k)$. Therefore, we
    have $|T^{\le n_k}|\le k\le d(n_k)$. Furthermore, we also have
    $g(f(k-1))<f(k)$. Since $s$ is the number of steps to determine the
    values $s, n_k, $ and $g(f(k-1))+1$. We have $2s\le f(k)$.  Thus,
    $f(k)$ can be computed in $f(k)$ steps by spending some idle steps.
    Therefore, the function $f(.)$ is time constructible.
\end{proof}

We will use the notion $\tally[f(k)]$ to characterize extremely
sparse tally sets with fast growing function such as
$f(k)=2^{2^{2^k}}$. It is easy to see that $\tally=\tally[I(.)]$, where
$I(.)$ is the identity function $I(k)=k$.

\begin{lemma}\label{diag-lemma}
    Let $d(n)$ and $g(n)$ be nondecreasing unbounded functions. If
    function $g(n))$ is padding stable, then there is a language $A$
    such that $A\in\ZPrTime(2^{O(n)})\cap \tally(d(n), g(n))$ and
    $A\not\in \ZPP$.
\end{lemma}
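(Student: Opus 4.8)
The plan is to construct the language $A$ by a standard diagonalization against all $\ZPP$ machines, but with a padding trick so that the diagonalization — which naturally needs double-exponential time — can be carried out inside a tally set of the prescribed sparseness while still landing in $\ZPrTime(2^{O(n)})$.

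First I would fix an enumeration $M_1, M_2, \ldots$ of zero-error probabilistic Turing machines (say, clocked so that $M_i$ on input $1^n$ runs in time $n^i + i$, which suffices because every $\ZPP$ language is decided by some $M_i$ with that clock). I want $A$ to be a tally set, so membership of $1^n$ will be decided by looking at $n$. The obstacle is that to diagonalize against $M_i$ I need to simulate it and flip the answer on some input, but a zero-error simulation with high confidence at the diagonalizing length $\ell$ costs roughly $2^{O(\ell)}$ or more expected time if I simulate at length $\ell$ directly — and I also need the diagonalizing lengths to be spread out enough to satisfy the gap condition $g(n) < m$ for consecutive elements and the density condition $|T^{\le n}| \le d(n)$. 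The resolution, following the overview in Section~4.2 and the $\Pad$ construction, is to diagonalize against $M_i$ at a huge length $N_i$ (chosen recursively so that $N_i$ is a value of the fast-growing constructible function coming from Lemma~\ref{f()-lemma}, hence automatically the density and raw gap constraints hold), but to decide the single bit "$1^{N_i} \in A$?" by simulating $M_i$ on a much \emph{shorter} input $1^{n_i}$ where $N_i = \ex^{(?)}(\cdots)$ relates to $n_i$ through the padding $\Pad$: concretely $A$ restricted to its support is $\Pad$ applied (iteratively, if needed) to a more modestly growing diagonal set, so that $2^{O(N_i)}$ time at length $N_i$ is enough to simulate $M_i$ running in time $\poly(n_i) \le \poly(\log N_i)$ and to amplify its success probability.

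The key steps, in order, would be: (1) invoke Lemma~\ref{f()-lemma} with the given $d(n), g(n)$ to get a time-constructible increasing unbounded $f$ with $\tally[f(n)] \subseteq \tally(d(n), g(n))$; since $g$ is padding stable, Lemma~\ref{padding-lemma}(\ref{part1}) tells me $\Pad$ preserves membership in $\tally(d(n),g(n))$, so I have room to insert padding. (2) Define the diagonal lengths recursively: $N_i$ is $f$ evaluated at an index large enough that $N_i$ exceeds $g(N_{i-1})$ and leaves enough slack for simulation; because $f$ is time-constructible I can, given an input $1^N$, decide in time $\poly(N)$ whether $N = N_i$ for some $i$ and recover $i$. (3) For such an $N = N_i$, run $M_i$ on the associated short input with its polynomial clock, repeating $O(N)$ independent trials and taking a majority-of-decisive-answers (zero-error machines output "?" or a committed answer; amplification drives the "?"-probability below $2^{-N}$), then set $1^{N_i} \in A$ iff the amplified answer says "reject"; for all other lengths $N$, put $1^N \notin A$. (4) Check the zero-error property and the $2^{O(n)}$ time bound: the whole computation on $1^N$ is dominated by $O(N)$ simulations of a $\poly(\log N)$-time machine plus the $\poly(N)$ constructibility check, well within $2^{O(N)}$, and the only randomness is in the amplified simulation, whose error is one-sided-to-zero in the usual $\ZPP$ sense, so $A \in \ZPrTime(2^{O(n)})$. (5) Verify $A \in \tally(d(n),g(n))$: the support of $A$ is contained in $\{N_i\}$, which by construction is contained in the range of $f$ (up to the padding composition, which Lemma~\ref{padding-lemma} shows stays in the class), giving both $|A^{\le n}| \le d(n)$ and the gap $g(N_{i-1}) < N_i$. (6) Conclude $A \notin \ZPP$: if some $\ZPP$ machine decided $A$, it equals $M_i$ under the fixed clock for some $i$, but $A$ was built to disagree with (the amplified behavior of) $M_i$ at $1^{N_i}$, and amplification guarantees the amplified answer matches $M_i$'s actual $\ZPP$-decision at that input, a contradiction.

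The main obstacle I anticipate is bookkeeping the parameters so that three competing demands are simultaneously met: the diagonal lengths $N_i$ must be \emph{large} enough (relative to $N_{i-1}$) to satisfy the gap function $g$ and to be legitimate values of the constructible sparsifying function $f$; they must be \emph{structured} as iterated pads of smaller lengths so that $2^{O(N_i)}$ time suffices to simulate the $i$-th $\ZPP$ machine with exponentially small error; and the length-recognition plus index-recovery on input $1^N$ must itself run in $2^{O(N)}$ time so the whole thing stays in $\ZPrTime(2^{O(n)})$. Threading these together is exactly where the padding-stable hypothesis on $g$ and the time-constructibility from Lemma~\ref{f()-lemma} are used; everything else is a routine clocked-diagonalization-with-amplification argument.
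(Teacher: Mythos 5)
Your plan is a \emph{direct} diagonalization against an enumeration of clocked zero-error probabilistic machines, and this is where it breaks. Any recursive enumeration $M_1,M_2,\ldots$ of clocked probabilistic machines necessarily contains machines that violate the zero-error promise (machines whose committed answers disagree across random strings, or that commit with probability less than $1/2$); legitimacy is a semantic condition that cannot be enumerated. For such an $M_i$, your ``majority of decisive answers over $O(N)$ trials'' is a random variable, not a fixed bit, so the set $A$ is not well defined at $1^{N_i}$, and the algorithm you describe is not zero-error (different executions can commit to different answers), so $A\notin\ZPrTime(2^{O(n)})$. The only deterministic way to extract $M_i$'s true behavior is to enumerate all $2^{\poly(N_i)}$ random strings, which does not fit in $2^{O(N_i)}$ time. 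This is exactly the known obstruction to proving hierarchy theorems for semantic classes such as $\ZPrTime$ and $\PrTime$, and it is why the paper does not diagonalize against probabilistic machines at all. The paper's proof is indirect: it assumes $\ZPrTime(2^{O(n)})\cap\tally(d,g)\subseteq\ZPP$, builds by ordinary \emph{deterministic} diagonalization a tally language $L\in\DTIME(2^{2^n})\cap\tally[f(n)]\setminus\DTIME(2^{2^{n-1}})$, observes that $\Pad(L)\in\DTIME(2^{O(n)})\cap\tally(d,g)$, applies the assumption twice (to $\Pad(L)$ and then, after unpadding, to $L$) together with $\ZPP\subseteq\DTIME(2^{n^{O(1)}})\subseteq\DTIME(2^{2^{n-1}})$, and contradicts the deterministic time hierarchy. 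The padding-stable hypothesis and Lemma~\ref{f()-lemma} are used only to keep $L$ and $\Pad(L)$ inside $\tally(d(n),g(n))$, not to shrink the cost of simulating probabilistic machines.

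A secondary problem: you define $A(1^{N_i})$ by negating $M_i$'s (amplified) answer on a \emph{shorter} input $1^{n_i}$, yet in step (6) you claim $A$ disagrees with $M_i$ at $1^{N_i}$. If $M_i$ decides $A$, then $A(1^{N_i})=\neg A(1^{n_i})$ is not a contradiction; a disagreement requires comparing $A$ and $M_i$ on the \emph{same} input. So even for legitimate machines the diagonalization as written does not close. To repair the argument you would have to either diagonalize on $1^{N_i}$ itself (which reintroduces the simulation-cost and zero-error issues above) or restructure the proof as the paper does, making the probabilistic class appear only through the contradiction hypothesis.
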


\begin{proof}
It is based on the classical translational method. Assume $\ZPrTime(2^{O(n)})\cap\tally(d(n), g(n))\subseteq \ZPP$.  Let $f(.)$ be a time constructible
    increasing unbounded function via Lemma~\ref{f()-lemma} such that\\
    $\tally[f(n)]\subseteq \tally(d(n), g(n))$. Let $t_1(n)=2^{2^n}$ and
    $t_2(n)=2^{2^{n-1}}$.  Let $L$ be a tally language in
    $\DTIME(t_1(n))\cap \tally[f(n)]$, but it is not in
    $\DTIME(t_2(n))$. Such a language $L$ can be constructed via a
    standard diagonal method. Let $M_1,\cdots, M_2$ be the list of
    Turing machines such that each $M_i$ has time upper bound by
    function $t_2(n)$. Define language $L\in \tally[f(n)]$ such that for
    each $k$, $1^{f(k)}\in L$ if and only if $M_k(1^{f(k)})$ rejects in
    $t_2(f(k))$ steps. We have $L\in\tally(d(n), g(n))$ by
    Lemma~\ref{f()-lemma}.

    Let $L_1=\Pad(L)$. We have $L_1\in\tally(d(n), g(n))$ by Lemma~\ref{padding-lemma}. We
    have $L_1\in \DTIME(2^{O(n)})\subseteq \ZPrTime(2^{O(n)})$. Thus,
    $L_1\in \ZPP$. So, $L\in \ZPrTime(2^{O(n)})$. Therefore, $L\in
    \ZPrTime(2^{O(n)})\cap \tally(d(n), g(n))$. We have $L\in \ZPP$.
    Thus, $L\in \DTIME(2^{n^{\bigO(1)}})\subseteq \DTIME(2^{2^{n-1}})$.
    This brings a contradiction.
\end{proof}

\begin{theorem}\label{second-main-thm} Let $d(n)$ and $g(n)$ be nondecreasing unbounded functions from $\mathbb{N}$ to $\mathbb{N}$. Assume that $g(n)$ is padding stable.
    If \MCSP~ is $\ZPP\cap \tally(d(n), g(n))$-hard under polynomial
    time truth-table reductions, then \EXP$\not=\ZPP$.
\end{theorem}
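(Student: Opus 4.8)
The plan is to argue by contradiction: assume $\EXP=\ZPP$ and derive a contradiction with Lemma~\ref{diag-lemma}. First I would record the standard consequences of this assumption. Since $\ZPP\subseteq\BPP\subseteq\EXP$ we get $\EXP=\BPP=\ZPP$, hence $\EXP\subseteq\ppoly$ by Adleman's theorem; and since $\ZPP\subseteq\NP\subseteq\EXP$ we get $\EXP=\NP$, which together with Meyer's theorem ($\EXP\subseteq\ppoly\Rightarrow\EXP=\Sigma_2^{P}$) collapses the polynomial hierarchy to $\NP=\EXP$. In particular $\MCSP\in\NP=\ZPP$.

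Next I would manufacture a language inside the class for which $\MCSP$ is assumed hard. Running the construction in the proof of Lemma~\ref{diag-lemma} (which only uses that $g$ is padding stable, as hypothesized) produces a tally $L\in\DTIME(2^{2^{n}})\cap\tally(d(n),g(n))$ with $L\notin\DTIME(2^{2^{n-1}})$, so in particular $L\notin\ZPP$. Put $L'=\Pad(L)=\{1^{2^{n}+n}\st 1^{n}\in L\}$. A direct padding computation shows $L'\in\DTIME(2^{O(m)})\subseteq\EXP$ on inputs of length $m$, so under our assumption $L'\in\ZPP$, and by Lemma~\ref{padding-lemma}(\ref{part1}) $L'\in\tally(d(n),g(n))$. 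Hence $L'\in\ZPP\cap\tally(d(n),g(n))$, and the hypothesis yields a polynomial-time truth-table reduction $h$ from $L'$ to $\MCSP$. Because $L'$ is tally, $h(1^{m})$ depends only on $m$; write $h(1^{m})=(y_{1},\dots,y_{r},C_{m})$ with each $y_{i}=(z_{i},1^{s_{i}})$ an instance of $\MCSP$ of length $m^{O(1)}$. The pair $d(n)=\log\log n$, $g(n)=\ex^{(k)}(2n)$ is chosen precisely so that this $L'=\Pad(L)$ stays in $\tally(d(n),g(n))$, keeping the restricted hardness hypothesis applicable.

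The crux is then an adaptation of the Murray--Williams argument. The point is \emph{not} to treat $\MCSP$ as a generic $\NP$ oracle: composing $h$ with a polynomial-time (under our assumption) subroutine for $\MCSP$ only re-derives the exponential bound $L$ already satisfies, because the degree of $h$ is uncontrolled. Instead I would use the circuit structure of $\MCSP$: a ``yes'' query $y_i$ carries a \emph{short} witness (a size-$s_i$ circuit for $z_i$), a ``no'' query is certified in $\coNP=\ZPP$, and --- crucially --- since $\EXP\subseteq\ppoly$ the truth tables $z_i$ that actually arise (being outputs of a polynomial-time machine) have small circuits, i.e.\ they are ``easy witnesses'' in the sense of the easy-witness method. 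Exploiting this together with the collapse $\PH=\NP=\ZPP$ should convert $h$ into a $\ZPP$ algorithm for $L$ (or at least one fast enough to violate $L\notin\DTIME(2^{2^{n-1}})$), contradicting Lemma~\ref{diag-lemma}. Making this easy-witness extraction go through against the extremely sparse class $\ZPP\cap\tally(d(n),g(n))$ and through truth-table rather than many-one reductions is the step I expect to be the main obstacle.
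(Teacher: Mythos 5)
Your setup is exactly the paper's: assume $\EXP=\ZPP$, deduce $\EXP\subseteq\ppoly$ via Adleman, take the diagonal language $L\in\ZPrTime(2^{O(n)})\cap\tally(d(n),g(n))$ with $L\notin\ZPP$ from Lemma~\ref{diag-lemma}, pad to $L'=\Pad(L)\in\ZPP\cap\tally(d(n),g(n))$ using Lemma~\ref{padding-lemma}, and invoke the hardness hypothesis to obtain a polynomial-time truth-table reduction $M$ from $L'$ to $\MCSP$. Your observation that composing $M$ with a fast decider for $\MCSP$ gains nothing is also correct. But the proposal stops at the decisive step, and you say so yourself (``the step I expect to be the main obstacle''). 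That step is where the proof actually lives, and its resolution is more elementary than the easy-witness extraction and $\PH$ collapse you gesture at. Concretely: define $R=\{(1^n,i,j)\st\text{the $i$-th bit of the $j$-th query of }M(1^{n+2^n})\text{ is }1,\ i,j\le p(n+2^n)\}$. Since $M$ runs in time $p(2^n+n)=2^{O(n)}$ and $(1^n,i,j)$ has length $O(n)$, we get $R\in\EXP\subseteq\ppoly$; let $C_n$ be a circuit of size $q(n)=n^{t_0}+t_0$ deciding $R$ on inputs with first component $1^n$. Now answer each query $\langle T,s\rangle$ of $M(1^{n+2^n})$ by cases: if $s\ge q(n)$, answer yes, because hardwiring $j$ into $C_n$ yields a circuit of size roughly $q(n)\le s$ whose truth table is $T$; if $s<q(n)$, decide the query by exhaustive search over all circuits of size less than $q(n)$ --- there are $2^{n^{O(1)}}$ of them and each is evaluated on $|T|=2^{O(n)}$ inputs, so this takes $2^{n^{O(1)}}$ time even though $T$ itself is exponentially long in $n$. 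Feeding these answers into the truth-table evaluator decides $1^n\in L$ deterministically in $2^{n^{O(1)}}$ time, so $L\in\EXP=\ZPP$, contradicting the choice of $L$.

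Note what this argument does and does not need. It needs the reduction to be truth-table (non-adaptive), so that the queries --- and hence $R$ --- are well defined without knowing any oracle answers; this is where the hypothesis ``truth-table'' is consumed. It does not need randomness, does not need $\coNP$ certification of ``no'' queries, and does not need $\NP=\ZPP$ or any collapse of $\PH$; the only consequence of $\EXP=\ZPP$ that is used beyond the final contradiction is $\EXP\subseteq\ppoly$. The extreme sparseness of $\tally(d(n),g(n))$ plays no role in this step either; it is needed only so that $L'=\Pad(L)$ lands back in the class against which $\MCSP$ is assumed hard, which you already handled via Lemma~\ref{padding-lemma}. So the obstacle you flagged is real in the sense that your sketch does not resolve it, but it dissolves once the query-bit language $R$ is made explicit and the queries are split by size parameter.
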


\begin{proof}
    Assume that \MCSP~is $(\ZPP\cap\tally(d(n), g(n))$-hard under
    polynomial time truth-table reductions, and $\EXP=\ZPP$.

    Let $L$ be a language in $\ZPrTime(2^{O(n)})\cap\tally[d(.), g(.)]$,
    but $L\not\in \ZPP$ by Lemma~\ref{diag-lemma}. Let $L'=\Pad(L)$.
    Clearly, every string $1^y$ in $L'$ has the property that $y=2^n+n$
    for some integer $n$. This property is easy to check and we reject
    all strings without this property in linear time. We have $L'\in
    \ZPP$. Therefore, there is a polynomial time truth-table reduction
    from $L'$ to~\MCSP~via a polynomial time truth-table reduction
    $M(.)$. Let polynomial $p(n)=n^c$ be the running time for $M(.)$ for a
    fixed $c$ and $n\ge 2$.

    Define the language $R=\{(1^n,i,j),$ the $i$-th bit of $j$-th query
    of $M(1^{n+{2^n}})$ is equal to $1,$ and $ i,j\le p(n+2^n) \}$. We
    can easily prove that $R$ is in $\EXP$. Therefore, $R\in
    \ZPP\subseteq \ppoly$ (See~\cite{Adleman}).


    Therefore, there is a class of polynomial size circuits
    $\{C_n\}_{n=1}^{\infty}$ to recognize $R$ such that $C_n(.)$
    recognize all $(1^n,i,j)$ with $ i,j\le p(n+2^n)$ in $R$. Assume
    that the size of $C_n$ is of size at most $q(n)=n^{t_0}+t_0$ for a
    fixed $t_0$.
    For an instance $x=1^n$ for $L$, consider the instance $y=1^{n+2^n}$
    for $L'$. We can compute all non-adaptive queries $\langle T,
    s(n)\rangle$ to $\MCSP$ in $2^{n^{O(1)}}$ time via $M(y)$. If
    $s(n)\ge q(n)$, the answer from $\MCSP$~for the query $\langle T,
    s(n)\rangle$ is yes since $\langle T, s(n)\rangle$ can be generated
    as one of the instances via the circuit $C_n(.)$. If $s(n)<q(n)$, we
    can use a brute force method to check if there exists a circuit of
    size at most $q(n)$ to generate $T$. It takes $2^{n^{O(1)}}$ time.
    Therefore, $L\in\EXP$. Thus, $L\in \ZPP$. This bring a contradiction
    as we already assume $L\not\in \ZPP$.
\end{proof}

\begin{corollary}For any integer $k$, if \MCSP~ is $\ZPP\cap \tally(\log^{(k)}(n), \ex^{(k)}(2n))$-hard under
    polynomial time truth-table reductions, then \EXP$\not=\ZPP$.
\end{corollary}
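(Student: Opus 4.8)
This corollary is an immediate specialization of Theorem~\ref{second-main-thm}, so the plan is simply to check that the two functions $d(n)=\log^{(k)}(n)$ and $g(n)=\ex^{(k)}(2n)$ satisfy the hypotheses of that theorem, and then invoke it verbatim. First I would note that for each fixed integer $k\ge 1$ the iterated logarithm $\log^{(k)}(n)$ is defined for all sufficiently large $n$, is nondecreasing, and tends to $+\infty$ (taking the floor, or restricting to the arguments where it is a positive integer, to keep it $\mathbb{N}\to\mathbb{N}$ as the definitions require); likewise $\ex^{(k)}(2n)$ is nondecreasing, integer-valued, and unbounded. So both $d(n)$ and $g(n)$ meet the ``nondecreasing unbounded functions from $\mathbb{N}$ to $\mathbb{N}$'' requirement.

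The only nontrivial hypothesis of Theorem~\ref{second-main-thm} is that the gap function $g(n)$ be padding stable, i.e.\ $g(2^n+n)<2^{g(n)}+g(n)$ for all $n>1$. But this is precisely Part~\ref{part2} of Lemma~\ref{padding-lemma}, which establishes that $g(n)=\ex^{(k)}(2n)$ is padding stable for every integer $k>0$. With that in hand, Theorem~\ref{second-main-thm} applies directly: assuming $\MCSP$ is $\ZPP\cap\tally(\log^{(k)}(n),\ex^{(k)}(2n))$-hard under polynomial time truth-table reductions, we conclude $\EXP\not=\ZPP$.

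There is essentially no obstacle here beyond the bookkeeping of confirming the hypotheses; the substantive content all lives in Theorem~\ref{second-main-thm} and in Lemma~\ref{padding-lemma}. If anything, the one point deserving a sentence of care is making the iterated-log density function genuinely a map $\mathbb{N}\to\mathbb{N}$ (e.g.\ $d(n)=\floor{\log^{(k)}(n)}$ for $n$ large and $d(n)=0$ or $1$ for small $n$), since $\tally(d(n),g(n))$ was defined in terms of $|T^{\le n}|\le d(n)$ with $d$ integer-valued; this does not affect the argument, as the class $\tally(\log^{(k)}(n),\ex^{(k)}(2n))$ is interpreted with this convention throughout.
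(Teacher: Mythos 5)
Your proposal is correct and follows exactly the paper's route: the paper also derives this corollary directly from Theorem~\ref{second-main-thm} together with Part~\ref{part2} of Lemma~\ref{padding-lemma}, which gives padding stability of $\ex^{(k)}(2n)$. Your extra remarks on making $\log^{(k)}(n)$ a genuine $\mathbb{N}\to\mathbb{N}$ function are reasonable bookkeeping that the paper leaves implicit.
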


\begin{proof}
    It follows from Theorem~\ref{second-main-thm} and
    Lemma~\ref{padding-lemma}.
\end{proof}

\begin{corollary}For any integer $k$, if \MCSP~ is $\ZPP\cap\tally$-hard under
    polynomial time truth-table reductions, then \EXP$\not=\ZPP$.
\end{corollary}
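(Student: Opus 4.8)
The plan is to obtain this as an immediate consequence of Theorem~\ref{second-main-thm} (equivalently, of the preceding corollary specialized to $k=1$); the integer $k$ in the present statement is vestigial and plays no role. The one conceptual point is that $\tally$-hardness is a formally \emph{stronger} hypothesis than $\tally(d(n),g(n))$-hardness for any admissible density and gap functions, so nothing genuinely new has to be proved.

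First I would note that for any nondecreasing unbounded $d(n),g(n)\colon\mathbb{N}\rightarrow\mathbb{N}$ one has $\tally(d(n),g(n))\subseteq\tally$, since by definition every set in $\tally(d(n),g(n))$ is a subset of $\{1\}^*$ and hence already a tally set; consequently $\ZPP\cap\tally(d(n),g(n))\subseteq\ZPP\cap\tally$. Next I would use that hardness is monotone under class inclusion: if $\MCSP$ is $C$-hard under $\le^P_{tt}$ reductions and $C'\subseteq C$, then every $B\in C'$ lies in $C$, so $B\le^P_{tt}\MCSP$, i.e.\ $\MCSP$ is $C'$-hard under $\le^P_{tt}$ reductions. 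Applying this with $C=\ZPP\cap\tally$ and $C'=\ZPP\cap\tally(\log^{(1)}(n),\ex^{(1)}(2n))$ converts the hypothesis ``$\MCSP$ is $\ZPP\cap\tally$-hard under polynomial time truth-table reductions'' into ``$\MCSP$ is $\ZPP\cap\tally(\log^{(1)}(n),\ex^{(1)}(2n))$-hard under polynomial time truth-table reductions.''

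Finally I would invoke Theorem~\ref{second-main-thm} with $d(n)=\log^{(1)}(n)$ and $g(n)=\ex^{(1)}(2n)$: these are nondecreasing, unbounded, and time constructible, and $g(n)=\ex^{(1)}(2n)$ is padding stable by part~\ref{part2} of Lemma~\ref{padding-lemma} (the case $k=1$). Hence $\EXP\neq\ZPP$, as desired. The only routine technicality is the convention that $\log^{(1)}(n)$ be read as a nondecreasing unbounded function $\mathbb{N}\rightarrow\mathbb{N}$ --- say, by rounding up and setting the value to $1$ on the finitely many inputs where $\log_2 n<1$ --- which alters membership in $\tally(d(n),g(n))$ by at most finitely many strings and is therefore harmless. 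I do not expect any real obstacle here: all of the content lives in Theorem~\ref{second-main-thm}, and beneath it in the translational diagonalization of Lemma~\ref{diag-lemma}.
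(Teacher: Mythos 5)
Your proposal is correct and is exactly the argument the paper intends (the paper states this corollary without proof, as an immediate consequence of Theorem~\ref{second-main-thm} via the containment $\ZPP\cap\tally(d(n),g(n))\subseteq\ZPP\cap\tally$ and the monotonicity of hardness under class inclusion). The instantiation $d(n)=\log^{(1)}(n)$, $g(n)=\ex^{(1)}(2n)$ together with Lemma~\ref{padding-lemma} is the right way to discharge the hypotheses, and the observation that the integer $k$ in the statement is vestigial is accurate.
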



\begin{corollary}
    If \MCSP~ is $\ZPP$-hard under polynomial time truth-table
    reductions, then \EXP$\not=\ZPP$.
\end{corollary}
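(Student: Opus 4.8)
The plan is to obtain this as an immediate specialization of Theorem~\ref{second-main-thm} (equivalently, of the preceding corollary with a fixed value of $k$). The first step is the observation that $\ZPP$-hardness under $\le^P_{tt}$ is at least as strong as $(\ZPP\cap\tally(d(n),g(n)))$-hardness under $\le^P_{tt}$, for \emph{any} nondecreasing unbounded $d(n),g(n)$. Indeed, $\ZPP\cap\tally(d(n),g(n))\subseteq\ZPP$, so every language $B\in\ZPP\cap\tally(d(n),g(n))$ in particular belongs to $\ZPP$, and hence reduces to $\MCSP$ via a polynomial time truth-table reduction by the hypothesis. Thus the hypothesis ``$\MCSP$ is $\ZPP$-hard under $\le^P_{tt}$'' entails ``$\MCSP$ is $(\ZPP\cap\tally(d(n),g(n)))$-hard under $\le^P_{tt}$'' for every such pair.

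The second step is to select a concrete admissible pair and check the single nontrivial hypothesis of Theorem~\ref{second-main-thm}, namely padding stability of the gap function. I would take $d(n)=\log^{(1)}(n)$ (rounded to an $\mathbb{N}$-valued nondecreasing unbounded function) and $g(n)=\ex^{(1)}(2n)=2^{2n}$; both are nondecreasing and unbounded, and by Lemma~\ref{padding-lemma}(\ref{part2}) the gap function $g(n)=\ex^{(1)}(2n)$ is padding stable. (Any fixed $k$ works equally well via Lemma~\ref{padding-lemma}(\ref{part2}); this is exactly the content of the intermediate corollary, so the final statement follows a fortiori from it.)

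The third step is simply to invoke Theorem~\ref{second-main-thm} with this $d,g$: combining it with the reduction of the first step gives $\EXP\not=\ZPP$, which is the desired conclusion. I do not expect any genuine obstacle in this argument — the only things to verify are the trivial containment $\ZPP\cap\tally(d(n),g(n))\subseteq\ZPP$ and the padding stability of $g$, the latter already supplied by Lemma~\ref{padding-lemma}. All the substantive work has been done upstream: the translational/diagonalization construction of a very sparse tally language in $\ZPrTime(2^{\bigO(n)})\cap\tally(d(n),g(n))$ outside $\ZPP$ (Lemma~\ref{diag-lemma}, built on Lemmas~\ref{padding-lemma} and~\ref{f()-lemma}), and the circuit-compression argument adapted from~\cite{MurrayWilliams17} inside the proof of Theorem~\ref{second-main-thm}.
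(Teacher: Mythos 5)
Your proposal is correct and matches the paper's (implicit) derivation exactly: the paper obtains this corollary as the specialization of Theorem~\ref{second-main-thm} via the containment $\ZPP\cap\tally(d(n),g(n))\subseteq\ZPP$ and the padding stability of $\ex^{(k)}(2n)$ from Lemma~\ref{padding-lemma}. No issues.
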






\section{Conclusions}
In this paper, we develop an algebraic  method to magnify  the
hardness of sparse sets in nondeterministic classes via a randomized
streaming model. It has a flexible trade off between the sparseness
and time complexity. This shows connection to the major problems to
prove $\NEXP\not=\BPP$. We also prove that if $\MCSP$~is \ZPP-hard,
then $\EXP\not=\ZPP$.

{\bf Acknowledgements:} This research was supported in part by National Science
    Foundation Early Career Award 0845376, and Bensten Fellowship of the
    University of Texas Rio Grande Valley.
    Part of this research was conducted while the author was visiting the School of Computer  Science and Technology of Hengyang Normal University in the summer of 2019 and was supported by National Natural Science Foundation of China 61772179.







\end{document}

\end{document}